\let\baraccent=\= 
\newcommand{\abs}[1]{\left| #1 \right|} 
\newcommand{\avg}[1]{\left< #1 \right>} 
\newcommand{\ket}[1]{\left| #1 \right>}
\newcommand{\bra}[1]{\left< #1 \right|} 
\newcommand{\braket}[2]{\left< #1 \vphantom{#2} \right| \left. #2 \vphantom{#1} \right>} 
\newcommand{\ketbra}[2]{ \vert#1 \vphantom{#2}\!\left> \!\right< \!#2 \vphantom{#1} \vert} 
\newcommand{\proj}[1]{\left| #1 \right>\!\!\left< #1 \right|}
\renewcommand{\=}[1]{\stackrel{#1}{=}} 
\newcommand\bbone{\ensuremath{\mathbbm{1}}}  
\newcommand\tr{\mathrm{tr}}
\newcommand{\scprod}[2]{\left\langle #1,#2\right\rangle}
\newcommand{\norm}[1]{\left\|#1\right\|}
\newcommand{\I}{\mathcal{I}}
\newcommand{\A}{\mathcal{A}}
\newcommand\eps{\varepsilon}
\newcommand{\allone}{\mathbb{J}}
\newcommand{\X}{\mathbb{X}}
\newcommand{\advstar}{\mathrm{Adv}^\star}
\newcommand{\ct}{\mathrm{ct}}
\newcommand{\dt}{\mathrm{dt}}
\newcommand{\nc}{\mathrm{nc}}
\renewcommand{\Re}{\operatorname{Re}}
\renewcommand{\H}{\mathcal{H}}
\newtheorem{thm}{Theorem}[section]
\newtheorem{lem}[thm]{Lemma}
\newtheorem{prop}[thm]{Proposition}
\newtheorem{cor}[thm]{Corollary}
\theoremstyle{definition}
\newtheorem{dfn}[thm]{Definition}
\newtheorem{cla}[thm]{Claim}
\newtheorem{claim}[thm]{Claim}
\newtheorem{fact}[thm]{Fact}
\theoremstyle{remark}
\title{A universal adiabatic quantum query algorithm}
\author{Mathieu Brandeho\footnote{\texttt{mbrandeh@ulb.ac.be}}~~and J\'er\'emie Roland\footnote{\texttt{jroland@ulb.ac.be}}\\
\small{\textit{Quantum Information and Communication, Ecole Polytechnique de Bruxelles,}}\\
\small{\textit{ Universit\'e libre de Bruxelles, 1050 Brussels, Belgium}}}
\date{}
\begin{document}

\maketitle

\begin{abstract}
Quantum query complexity is known to be characterized by the so-called quantum adversary bound. While this result has been proved in the standard discrete-time model of quantum computation, it also holds for continuous-time (or Hamiltonian-based) quantum computation, due to a known equivalence between these two query complexity models. In this work, we revisit this result by providing a direct proof in the continuous-time model. One originality of our proof is that it draws new connections between the adversary bound, a modern technique of theoretical computer science, and early theorems of quantum mechanics. Indeed, the proof of the lower bound is based on Ehrenfest's theorem, while the upper bound relies on the adiabatic theorem, as it goes by constructing a universal adiabatic quantum query algorithm. Another originality is that we use for the first time in the context of quantum computation a version of the adiabatic theorem that does not require a spectral gap.
\end{abstract}

\section{Introduction}

The quantum adversary method was originally introduced by Ambainis \cite{Amb02} for lower-bounding the \textit{quantum query complexity} $Q(f)$ of a function $f$. It is based on optimizing a matrix $\Gamma$ assigning weights to pairs of inputs. It was later shown by H\o yer et al. \cite{HLS07} that using negative weights also provides a lower bound, which is stronger for some functions. A series of works \cite{Rei09,Rei11,RS12} then led to the breakthrough result that this \textit{generalized} adversary bound, which we will simply call adversary bound from now on, actually characterizes the quantum query complexity of any function $f$ with boolean output and binary input alphabet. This is shown by constructing a tight algorithm based on the dual of the semidefinite program corresponding to the adversary bound\footnote{Note that constructing a tight algorithm for a specific problem using this method requires to find an optimal feasible point for the semidefinite program, so that this method is not necessarily constructive. The same limitation will affect the universal adiabatic algorithm in the present article.}. Finally, Lee et al.~\cite{LMRSS11} have generalized this result to the quantum query complexity of state conversion, where instead of computing a function $f(x)$, one needs to convert a quantum state $\ket{\rho_x}$ into another quantum state $\ket{\sigma_x}$.

All these results where obtained in the usual discrete-time query model, where each query corresponds to applying a unitary oracle $O_x$. In this model, an algorithm then consists in a series of input-independent unitaries $U_1,U_2,\ldots,U_T$, interleaved with oracle calls $O_x$. Another natural model is the continuous-time (or Hamiltonian-based) model where the oracle corresponds to a Hamiltonian $H_x$, and the algorithm consists in applying a possibly time-dependent, but input-independent, \textit{driver} Hamiltonian $H_D(t)$, together with the oracle Hamiltonian. The two models are related by the fact that the unitary oracle $O_x$ can be simulated by applying the Hamiltonian oracle $H_x$ for some constant amount of time. This implies that the continuous-time model is at least as powerful as the discrete-time model. In the other direction, Cleve et al.~\cite{CGM+09} have shown that the discrete-time model can simulate the continuous-time model up to at most a sublogarithmic overhead, which implies that the continuous- and discrete-time models are equivalent up to a sublogarithmic factor. Lee et al.~\cite{LMRSS11} later improved this result to a full equivalence of both models, by showing that the fractional query model, an intermediate model proved in~\cite{CGM+09} to be equivalent to the continuous-time model, is also lower bounded by the adversary bound, so that all these models are characterized by this same bound (in the case of functions, a similar result can be obtained by extending an earlier proof of Yonge-Mallo, originally considering the adversary bound with positive weights, to the case of negative weights~\cite{YM11}).

Even though these results imply that the continuous-time quantum query complexity is characterized by the adversary bound, they do not provide an explicit Hamiltonian-based query algorithm, except the one obtained from the discrete-time algorithm by replacing each unitary oracle call by the application of the Hamiltonian oracle for a constant amount of time. The resulting Hamiltonian of this algorithm then involves many discontinuities (at all times in between unitary gates), which is not very satisfying from the point of view of physics, where \textit{reasonable} Hamiltonians are smooth. However, such discontinuities are not unavoidable, as for some problems, continuous-time query algorithms based on smooth Hamiltonians are known.

The first example is unstructured search, for which Farhi and Gutmann~\cite{FG96} proposed a continuous-time analogue of Grover's algorithm based on a simple time-independent Hamiltonian (later, van Dam et al. \cite{VanDam2002}, as well as Roland and Cerf~\cite{RC02}, independently proposed an adiabatic version of this algorithm, based on a slowly varying Hamiltonian). Algorithms were also developed in the continuous-time model for various problems such as spatial search~\cite{Childs,Childs2004b,Foulger2014}, oracle identification~\cite{Mochon2007}, or element distinctness~\cite{Childs2009}. In a seminal paper, Farhi et al.~\cite{FGG07} proposed a quantum algorithm for the NAND-tree based on scattering a wave incoming on the tree, using a time-independent Hamiltonian. It is precisely this algorithm that, through successive extensions, led to the tight algorithm based on the adversary bound for any function in~\cite{Rei11}, but most of these extensions were using the discrete-time model.

In this article, we give a new continuous-time quantum query algorithm for any state conversion problem based on a slowly varying Hamiltonian, and also provide a direct proof of its optimality based on Ehrenfest’s theorem, hence proving that the quantum query complexity of any state conversion problem is characterized by the adversary bound.  The soundness of the adiabatic evolution used in our algorithm relies on a lemma from Avron and Elgart \cite{Avron1998}, which does not require the usual gap condition but only weaker spectral conditions, and was originally introduced to study atoms in quantized radiation fields. To the best of our knowledge, it is the first time that such an adiabatic theorem without a gap condition is used in the context of quantum computation.

The structure of the article is as follows. Section~\ref{sect:prelim} is devoted to preliminaries: in Subsection~\ref{ssect:prelim}, we define the necessary mathematical notions; in Subsection~\ref{sect:adia}, we recall the quantum adiabatic evolution and quantum adiabatic theorems; in Subsection~\ref{sect:query}, we recall notions of quantum query complexity; and in Section~\ref{sect:method}, the discrete-time adversary method. Original contributions start in Section~\ref{sect:ct-adversary}, where we give a direct proof that the adversary bound remains a lower bound for continuous-time quantum query complexity (Theorem~\ref{thm:ct-adversary}). Finally, in Section~\ref{sect:adiaconvert}, we present our adiabatic quantum query algorithm \textbf{AdiaConvert}, and show that it is optimal, implying the characterization of the bounded-error quantum query complexity (Theorem~\ref{thm:adv}).

\section{Preliminaries}\label{sect:prelim}

\subsection{Definitions}\label{ssect:prelim}
Throughout this article, $\Sigma$ is a finite set representing the input alphabet, $\X\subset \Sigma^n$ is a subset of strings of length $n$, and $x\in\X$ denotes a possible input string.
\begin{dfn}(Matrix norms and inner product) Let $A$ and $B$ be $n$-by-$n$ matrices
\begin{itemize}
\item Inner product:  $\left\langle A,B\right\rangle= \tr(A^* B)$, where $A^*$ is the adjoint matrix of $A$,
\item Hadamard product: $(A\circ B)_{ij}= A_{ij}\,.\, B_{ij}$,
\item Operator norm:  $\|A\|= \max_{\ket{v}}\frac{\|A\ket{v}\|}{\|\ket{v}\|}= \max_{\ket{u},\ket{v}}\frac{\bra{u}A\ket{v}}{\|\ket{u}\| .\|\ket{v}\|}$,
\item Trace norm: $\|A\|_\tr = \max_B \frac{\left\langle A,B\right\rangle}{\|B\|}$.
\end{itemize}
\end{dfn}

These definitions imply the following properties
\begin{lem}
 For any $n$-by-$n$ matrices $A,B,C$, we have
\begin{itemize}
 \item $\left\langle A\circ C,B\right\rangle=\left\langle A,B\circ C^*\right\rangle$
 \item $\left\langle A,B\right\rangle \leq \|A\|_\tr\cdot \|B\|$
\end{itemize}
\end{lem}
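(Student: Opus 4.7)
The plan is to establish both identities by direct entry-wise computation from the definitions, since the statements are purely algebraic identities about the Frobenius-type inner product and its interaction with the Hadamard product and the operator norm.

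For the first identity I would unpack the inner product via $\langle X,Y\rangle=\tr(X^*Y)=\sum_{i,j}\overline{X_{ij}}\,Y_{ij}$ together with the entry-wise rule $(X\circ Y)_{ij}=X_{ij}Y_{ij}$. The left-hand side then becomes
\[
\langle A\circ C,B\rangle=\sum_{i,j}\overline{(A\circ C)_{ij}}\,B_{ij}=\sum_{i,j}\overline{A_{ij}}\,\overline{C_{ij}}\,B_{ij},
\]
while the right-hand side becomes
\[
\langle A,B\circ C^*\rangle=\sum_{i,j}\overline{A_{ij}}\,B_{ij}\,(C^*)_{ij}.
\]
The two sums agree once one reads $C^*$ in this Hadamard setting as entry-wise complex conjugation, so that $(C^*)_{ij}=\overline{C_{ij}}$; equivalently, when $C$ is symmetric, which is the case for adversary matrices in the applications that follow. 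Under that reading the identity is just the commutativity and associativity of scalar multiplication in the sum.

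For the second inequality there is essentially nothing to do beyond invoking the very definition of the trace norm given just above, $\|A\|_\tr=\max_{B\neq 0}\frac{\langle A,B\rangle}{\|B\|}$. For any fixed nonzero $B$ the maximum majorizes the ratio, so $\langle A,B\rangle/\|B\|\leq \|A\|_\tr$; multiplying through by $\|B\|$ yields the claim, and the case $B=0$ is trivial.

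I do not expect any real obstacle: both statements reduce to elementary manipulations of sums. The only mildly delicate point is fixing conventions for the first identity (interpretation of $C^*$, or implicit symmetry of $C$); once that is settled, the proof is a one-line calculation in each case.
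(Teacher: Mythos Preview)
The paper does not actually prove this lemma; it only states that ``these definitions imply the following properties'' and moves on. Your direct entry-wise verification is exactly the kind of argument that sentence is gesturing at, so there is nothing substantive to compare.

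Your treatment of the second item is correct and is the intended one-liner: the inequality is immediate from the very definition $\|A\|_\tr=\max_{B}\langle A,B\rangle/\|B\|$ given in the paper.

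For the first item your computation is right, and so is your caveat. With the paper's convention that $A^*$ denotes the adjoint (conjugate transpose), one has $(C^*)_{ij}=\overline{C_{ji}}$, whereas the left-hand side produces $\overline{C_{ij}}$; hence the identity as literally written fails for general $C$ (e.g.\ take $A=B=C$ equal to a single off-diagonal matrix unit). The identity that holds unconditionally is $\langle A\circ C,B\rangle=\langle A,B\circ\bar C\rangle$ with $\bar C$ the entry-wise conjugate, and this coincides with the paper's formula precisely when $C=C^{T}$. In the paper's applications the matrices in the role of $C$ (such as $\Delta_j$, $vv^*$, or $\rho-\sigma$) are of this type, so the imprecision is harmless in context. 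Your flagging of the convention issue is accurate; simply state the correct general identity with $\bar C$ and note the specialization, and the proof is complete.
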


In this context, the following matrix norm will be useful:
\begin{dfn}[$\gamma_2$ norm]\label{def:gamma2}
Let $\mathcal{D}$ be a finite set, $A$ a $\vert \mathcal{D}\vert$-square matrix. The norm $\gamma_2(A)$ is defined as
\[
\begin{split}
\gamma_2(A)&= \min_{\substack{m\in \mathbb{N}  \\ \ket{u_{x}},\ket{v_{y}} \in \mathbb{C}^m}}\Bigg\{
\max_{x\in \mathcal{D} } \,\max \Big\{  \|\ket{u_{x }} \|^2 ,    \|\ket{v_{y }} \|^2\Big\}
\Bigg\vert \forall\, x,y\in\mathcal{D}, \, A_{x,y}=\braket{u_x}{v_y} \Bigg\},\\
& =\max_{\substack{\ket{u},\ket{v}  \\ \|\ket{u}\|=\|\ket{v}\|=1}} \|A\circ\ketbra{u}{v}\|_\tr.
\end{split}\]
\end{dfn}

In particular, it is shown in~\cite{LMRSS11} that the dual of the Adversary bound can be seen as a variation of the $\gamma_2$ norm dubbed the filtered $\gamma_2$ norm.
\begin{dfn}[Filtered $\gamma_2$ norm]\label{def:filter}
Let $\mathcal{D}_1$ and $\mathcal{D}_2$ be two finite sets, $A,\,Z_1,\dots,Z_n$  matrices  with $\vert \mathcal{D}_1\vert$ rows and  $\vert \mathcal{D}_2\vert$ columns, and  $Z=\{Z_1,\dots,Z_n\}$. The norm $\gamma_2(A\vert Z)$ is defined as
\[
\begin{split}
\gamma_2(A\vert Z)=& \min_{\substack{m\in \mathbb{N}\\ \ket{u_{x,j}},\ket{v_{y,j}} \in \mathbb{C}^m}}
\max \Big\{ \max_{x\in \mathcal{D}_1}\sum_j \|\ket{u_{x,j}} \|^2 , \max_{y\in \mathcal{D}_2}\sum_j \|\ket{v_{y,j}} \|^2\Big\}\\
&\mathrm{subject\,\, to}\quad\forall (x,y)\in \mathcal{D}_1\times \mathcal{D}_2, \quad A_{x,y}=\sum_j (Z_j)_{x,y}  \braket{u_{x,j}}{v_{y,j}},\\
=&\qquad\max_\Gamma  \quad \| \Gamma\circ A \| \qquad\mathrm{subject\,\, to}\quad \forall j \hspace{0.2cm} \|\Gamma \circ Z_j \| \leq 1.
\end{split}
\]
\end{dfn}

\begin{cla}\cite{LR13} \label{cla:LR} For any matrices $A$, $B$ where $A\circ B$ is defined, $\|A\circ B\|\leq \gamma_2(A).\|B\|$.
\end{cla}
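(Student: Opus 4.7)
The plan is to leverage the factorization interpretation in the first line of Definition~\ref{def:gamma2}: any $A$ admits (arbitrarily close to optimally) a decomposition $A_{x,y}=\braket{u_x}{v_y}$ with $\max_x\|\ket{u_x}\|^2$ and $\max_y\|\ket{v_y}\|^2$ at most $\gamma_2(A)$ (after rescaling $\ket{u_x}\mapsto c\ket{u_x},\ \ket{v_y}\mapsto c^{-1}\ket{v_y}$, one can assume both maxima equal $\gamma_2(A)$). From such a factorization I would build, for each pair of unit vectors $\ket{\alpha}=\sum_x \alpha_x\ket{x}$ and $\ket{\beta}=\sum_y \beta_y\ket{y}$, the enlarged vectors
\[
\ket{U_\alpha}=\sum_x \alpha_x\ket{x}\otimes\ket{u_x},\qquad \ket{V_\beta}=\sum_y \beta_y\ket{y}\otimes\ket{v_y}.
\]

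The key identity is the direct computation
\[
\matrixel{\alpha}{A\circ B}{\beta}=\sum_{x,y}\alpha_x^*\,A_{x,y}B_{x,y}\,\beta_y=\sum_{x,y}\alpha_x^* B_{x,y}\beta_y\braket{u_x}{v_y}=\matrixel{U_\alpha}{B\otimes\bbone}{V_\beta},
\]
which reduces the left-hand side to a matrix element of $B\otimes\bbone$. Applying the operator-norm bound on the right, one gets $|\matrixel{\alpha}{A\circ B}{\beta}|\le \|B\otimes\bbone\|\cdot\|\ket{U_\alpha}\|\cdot\|\ket{V_\beta}\|=\|B\|\cdot\|\ket{U_\alpha}\|\cdot\|\ket{V_\beta}\|$.

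It then remains to control the norms of $\ket{U_\alpha}$ and $\ket{V_\beta}$. Since $\ket{\alpha}$ and $\ket{\beta}$ are unit vectors,
\[
\|\ket{U_\alpha}\|^2=\sum_x|\alpha_x|^2\|\ket{u_x}\|^2\le \max_x\|\ket{u_x}\|^2,\qquad \|\ket{V_\beta}\|^2\le \max_y\|\ket{v_y}\|^2,
\]
and both are bounded by $\gamma_2(A)$ for the chosen (near-optimal) factorization. Combining gives $|\matrixel{\alpha}{A\circ B}{\beta}|\le\gamma_2(A)\,\|B\|$; taking the supremum over unit $\ket{\alpha},\ket{\beta}$ yields $\|A\circ B\|\le\gamma_2(A)\,\|B\|$, and passing to the infimum over factorizations removes any $\varepsilon$ slack.

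The argument is essentially bookkeeping, so there is no real obstacle. The only subtlety I would be careful about is the rescaling step that equalizes $\max_x\|\ket{u_x}\|^2$ and $\max_y\|\ket{v_y}\|^2$, needed to upgrade the geometric-mean bound $\sqrt{\max_x\|\ket{u_x}\|^2\cdot\max_y\|\ket{v_y}\|^2}$ into $\gamma_2(A)$ rather than something weaker; this invariance of the factorization under $(c,c^{-1})$-rescaling is exactly why $\gamma_2$ is naturally defined via the maximum of both row- and column-norms.
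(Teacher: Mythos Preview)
The paper does not actually prove Claim~\ref{cla:LR}; it merely states it with a citation to~\cite{LR13}. So there is no in-paper argument to compare against.

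Your proof is correct and is the standard one. The identity $\matrixel{\alpha}{A\circ B}{\beta}=\matrixel{U_\alpha}{B\otimes\bbone}{V_\beta}$ is verified by direct expansion, the bound $\|B\otimes\bbone\|=\|B\|$ is immediate, and the norm estimates $\|\ket{U_\alpha}\|^2\le\max_x\|\ket{u_x}\|^2$, $\|\ket{V_\beta}\|^2\le\max_y\|\ket{v_y}\|^2$ follow from $\|\ket{\alpha}\|=\|\ket{\beta}\|=1$. Your remark about the $(c,c^{-1})$-rescaling is exactly right: it shows that in the definition of $\gamma_2$ the optimum (or any near-optimum) can be taken with $\max_x\|\ket{u_x}\|^2=\max_y\|\ket{v_y}\|^2$, so the product $\|\ket{U_\alpha}\|\cdot\|\ket{V_\beta}\|$ is bounded by $\gamma_2(A)$ rather than just by the possibly larger quantity $\max\{\max_x\|\ket{u_x}\|^2,\max_y\|\ket{v_y}\|^2\}$. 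Since $\mathcal{D}$ is finite the minimum in Definition~\ref{def:gamma2} is attained, so no $\varepsilon$-limit is even needed.
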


The Hadamard product fidelity is introduced in \cite{LR13} to characterize the output condition of quantum query problems. Whereas the usual fidelity compares density matrices, the Hadamard product fidelity compares Gram matrices (note that if $\rho$ is a Gram matrix and $\ket{u}$ is a normalized state, then $\rho\circ\ketbra{u}{u}$ is a density matrix).
\begin{dfn}[Hadamard product fidelity]
The Hadamard product fidelity between two Gram matrices $\rho$ and $\sigma$ is defined as
\[\mathcal{F}_H(\rho,\sigma)= \min_{\ket{u}:\|\ket{u}\|=1}\mathcal{F}(\rho\circ \ketbra{u}{u},\sigma\circ \ketbra{u}{u}),\]
where $\mathcal{F}(\rho',\sigma')$ is the fidelity between two density matrices $\rho'$ and $\sigma'$, defined as $\mathcal{F}(\rho',\sigma')=\tr \sqrt{\sqrt{\rho'}~\sigma'\sqrt{\rho'}}$.
\end{dfn}

We similary define the Hadamard product distance from the trace distance.
\begin{dfn}[Hadamard product distance]\label{def:trace}
The Hadamard product distance between two Gram matrices $\rho$ and $\sigma$ is defined as
\[\mathcal{D}_H(\rho,\sigma)= \max_{\ket{u}:\|\ket{u}\|=1}\mathcal{D}(\rho\circ \ketbra{u}{u},\sigma\circ \ketbra{u}{u}),\]
where $\mathcal{D}(\rho',\sigma')$ is the trace distance between two density matrices $\rho'$ and $\sigma'$, defined as $\mathcal{D}(\rho',\sigma')=\frac{1}{2} \|\rho'-\sigma'\|_\tr $.
\end{dfn}

\begin{thm} \cite{Fuchs1998} For any density matrices $\rho$, $\sigma$, we have  $1-\mathcal{D}(\rho,\sigma)\leq \mathcal{F}(\rho,\sigma)\leq \sqrt{1-\mathcal{D}^2(\rho,\sigma)}$.
\end{thm}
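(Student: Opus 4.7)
The plan is to prove the two inequalities separately, using the classical Fuchs--van de Graaf strategy: Uhlmann's theorem plus monotonicity of the trace distance for the upper bound, and the measurement (POVM) characterizations of fidelity and trace distance for the lower bound. Both inequalities reduce in a clean way to the pure-state identity $\mathcal{F}^2+\mathcal{D}^2=1$ or to a purely classical inequality between distributions.

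For the upper bound $\mathcal{F}(\rho,\sigma)\leq\sqrt{1-\mathcal{D}^2(\rho,\sigma)}$, I would first verify the pure-state case: for $\rho=\proj{\psi}$, $\sigma=\proj{\phi}$, a direct computation gives $\mathcal{F}(\proj{\psi},\proj{\phi})=\tr\sqrt{\proj{\psi}\proj{\phi}\proj{\psi}}=|\braket{\psi}{\phi}|$, while diagonalizing $\proj{\psi}-\proj{\phi}$ in the two-dimensional span of $\ket{\psi},\ket{\phi}$ yields $\mathcal{D}(\proj{\psi},\proj{\phi})=\sqrt{1-|\braket{\psi}{\phi}|^2}$, so equality holds. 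To lift this to the mixed case I invoke Uhlmann's theorem, which provides purifications $\ket{\psi},\ket{\phi}$ (in some enlarged space) such that $\mathcal{F}(\rho,\sigma)=|\braket{\psi}{\phi}|$, and then apply contractivity of the trace distance under the partial trace:
\[
\mathcal{D}(\rho,\sigma)\;\leq\;\mathcal{D}(\proj{\psi},\proj{\phi})\;=\;\sqrt{1-|\braket{\psi}{\phi}|^2}\;=\;\sqrt{1-\mathcal{F}^2(\rho,\sigma)},
\]
which is the desired bound.

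For the lower bound $1-\mathcal{D}(\rho,\sigma)\leq\mathcal{F}(\rho,\sigma)$, I would use the variational characterizations over POVMs $\{E_m\}$: setting $p_m=\tr(E_m\rho)$ and $q_m=\tr(E_m\sigma)$, one has $\mathcal{F}(\rho,\sigma)=\min_{\{E_m\}}\sum_m\sqrt{p_m q_m}$ and $\mathcal{D}(\rho,\sigma)=\max_{\{E_m\}}\tfrac12\sum_m|p_m-q_m|$. The key elementary step is then the classical chain
\[
\sum_m\sqrt{p_m q_m}\;\geq\;\sum_m\min(p_m,q_m)\;=\;1-\tfrac12\sum_m|p_m-q_m|,
\]
where the first inequality follows from $\sqrt{ab}\geq\min(a,b)$ for $a,b\geq 0$, and the equality from $\min(a,b)=\tfrac{a+b-|a-b|}{2}$ together with $\sum_m p_m=\sum_m q_m=1$. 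Picking the POVM that minimizes $\sum_m\sqrt{p_m q_m}$ on the left and bounding the right-hand side by $1-\mathcal{D}(\rho,\sigma)$ delivers the inequality.

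The main obstacle is not the algebra but justifying the two variational formulas: Uhlmann's theorem and the fact that the quantum fidelity (resp.\ trace distance) is the infimum (resp.\ supremum) of the classical Bhattacharyya coefficient (resp.\ total variation) over POVM statistics. Since both are standard quantum-information results and the theorem itself is quoted from \cite{Fuchs1998}, I would cite them rather than reprove them; granting them, the argument above is essentially two short manipulations.
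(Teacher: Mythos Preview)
Your proof is correct and follows exactly the classical Fuchs--van de Graaf argument: Uhlmann plus monotonicity for the upper bound, and the POVM variational formulas reduced to an elementary classical inequality for the lower bound. Note, however, that the paper does not actually prove this theorem; it simply quotes it from \cite{Fuchs1998} as a known result, so there is no ``paper's own proof'' to compare against. Your write-up is the standard proof one would find behind that citation.
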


\begin{cor}\label{cor:relation}For any Gram matrices $\rho$, $\sigma$, we have $1-\mathcal{D}_H(\rho,\sigma)\leq \mathcal{F}_H(\rho,\sigma)\leq \sqrt{1-\mathcal{D}_H^2(\rho,\sigma)}$.
\end{cor}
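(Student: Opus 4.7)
The plan is to reduce Corollary~\ref{cor:relation} directly to the Fuchs–van de Graaf inequalities (the preceding theorem of Fuchs) by choosing the right optimizer $\ket{u}$ on each side. Since for any normalized $\ket{u}$ the matrices $\rho\circ\ketbra{u}{u}$ and $\sigma\circ\ketbra{u}{u}$ are genuine density matrices (as noted just before Definition~\ref{def:trace}), the cited theorem applies to them; the only work is to push the min/max through the inequality in the correct direction.

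For the lower bound $1-\mathcal{D}_H(\rho,\sigma)\leq \mathcal{F}_H(\rho,\sigma)$, I would pick $\ket{u^*}$ to be a vector attaining the minimum in the definition of $\mathcal{F}_H(\rho,\sigma)$. Applying Fuchs–van de Graaf to $\rho'=\rho\circ\ketbra{u^*}{u^*}$ and $\sigma'=\sigma\circ\ketbra{u^*}{u^*}$ gives $\mathcal{F}_H(\rho,\sigma)=\mathcal{F}(\rho',\sigma')\geq 1-\mathcal{D}(\rho',\sigma')$, and then bounding $\mathcal{D}(\rho',\sigma')\leq \max_{\ket{u}}\mathcal{D}(\rho\circ\ketbra{u}{u},\sigma\circ\ketbra{u}{u})=\mathcal{D}_H(\rho,\sigma)$ closes the argument.

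For the upper bound $\mathcal{F}_H(\rho,\sigma)\leq \sqrt{1-\mathcal{D}_H^2(\rho,\sigma)}$, I would instead pick $\ket{u^\star}$ to be a maximizer in the definition of $\mathcal{D}_H(\rho,\sigma)$. Then $\mathcal{F}_H(\rho,\sigma)$, being a minimum over $\ket{u}$, is at most $\mathcal{F}(\rho\circ\ketbra{u^\star}{u^\star},\sigma\circ\ketbra{u^\star}{u^\star})$, and another application of Fuchs–van de Graaf gives the bound $\sqrt{1-\mathcal{D}^2(\rho\circ\ketbra{u^\star}{u^\star},\sigma\circ\ketbra{u^\star}{u^\star})}=\sqrt{1-\mathcal{D}_H^2(\rho,\sigma)}$.

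There is essentially no obstacle: the only thing to be careful about is that the witness $\ket{u}$ has to come from the correct optimization problem on each side (minimizer of $\mathcal{F}_H$ for the lower bound, maximizer of $\mathcal{D}_H$ for the upper bound), so that the appropriate inequality between $\mathcal{F}_H,\mathcal{D}_H$ and their pointwise counterparts $\mathcal{F},\mathcal{D}$ goes through in the right direction. Existence of the optimizers is guaranteed by compactness of the unit sphere together with continuity of $\ket{u}\mapsto\mathcal{F}(\rho\circ\ketbra{u}{u},\sigma\circ\ketbra{u}{u})$ and $\ket{u}\mapsto\mathcal{D}(\rho\circ\ketbra{u}{u},\sigma\circ\ketbra{u}{u})$.
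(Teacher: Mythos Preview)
Your argument is correct and is exactly the approach the paper intends: the corollary is stated immediately after the Fuchs--van de Graaf theorem without proof, and your derivation---choosing the minimizer of $\mathcal{F}_H$ for the lower bound and the maximizer of $\mathcal{D}_H$ for the upper bound, then applying the pointwise inequalities---is the natural (and essentially unique) way to obtain it.
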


\begin{dfn}[Distance between quantum states]
We say that two normalized quantum states $\ket{\phi},\ket{\psi}\in\H$ are $\eps$-distant if
$
 \norm{\ket{\phi}-\ket{\psi}}\leq\eps.
$
\end{dfn}

\subsection{Quantum query complexity}\label{sect:query}

In classical computation, a query algorithm computes a function $f: \X\subset \Sigma^n \rightarrow B$ where the input $x\in \X$ can only be accessed through queries to an oracle that, on input $j\in [n]$, outputs $x_j\in\Sigma$. A query algorithm can be seen as a decision tree \cite{BdW02} where each vertex represents a decision taken after one query. The depth of the tree then corresponds to the number of queries used by this algorithm to compute $f$ in the worst case. The \textit{query complexity of $f$} is the minimum depth of all decision trees computing $f$ exactly. 

In quantum computation, query complexity can be generalized to state conversion problems, where one should convert a quantum state $\ket{\rho_x}$ into another state $\ket{\sigma_x}$, each depending on the input $x$, which can once again only be accessed via an oracle. The evaluation of a function $f$ is the particular case where initial states are independent of $x$, and final states are orthonormal for $x,y$ such that $f(x)\neq f(y)$. For any set of quantum states $\{\ket{\rho_x}\}_x$, it is enough to consider the Gram matrix $\rho_{x,y}=\braket{\rho_x}{\rho_y}$, because if $\braket{\rho_x}{\rho_y}=\braket{\rho'_x}{\rho'_y}$ for all $x,y$, then there exists a unitary transformation $U$ independent of $x$ such that $\ket{\rho_x}=U\ket{\rho'_x}$ for all $x$. This implies that a query algorithm for the set of states $\{\ket{\rho}\}_x$ can be converted into a query algorithm for the set of states $\{\ket{\rho'}\}_x$ without additive cost, and \textit{vice versa}. We will therefore denote by a pair of Gram matrices $(\rho,\sigma)$ the problem of converting a set of states  $\{\ket{\rho_x}\}_x$ into another set of states  $\{\ket{\sigma_x}\}_x$.

In the discrete-time model of quantum query complexity, we can consider without loss of generality an oracle $O_x$ acting on an $n$-dimensional input register and a $(|\Sigma|+1)$-dimensional output register as
\begin{align}\label{eq:uniQ}
 O_x:
\left\{
\begin{array}{ll}
\ket{j}\ket{\bar{0}}\mapsto\ket{j}\ket{x_j} &\forall j\in [n]\\
 \ket{j}\ket{x_j}\mapsto\ket{j}\ket{\bar{0}} &\forall j\in [n]\\
 \ket{j}\ket{y}\mapsto\ket{j}\ket{y} &\forall j\in [n],y\in\Sigma\setminus\{x_j\}
\end{array}
\right.
\end{align}
where $\bar{0}$ is an additional output alphabet symbol, that can be seen as a blank symbol. A query algorithm in this model is then given by a succession of input-independent unitaries $U_t$ interleaved with oracle calls $O_x$. The discrete-time quantum query complexity $Q^\dt_0(\rho,\sigma)$ is the minimum number of oracle calls of any such algorithm converting $\rho$ to $\sigma$ exactly. (Note that there exist alternative definitions for the oracle $O_x$, but they only affect the definition of $Q^\dt_0(\rho,\sigma)$ by at most a constant factor.)

In the continuous-time model, the oracle is a Hamiltonian $H_\mathcal{Q}(x)$ of the general form
\begin{equation}\label{hamiQ}
H_\mathcal{Q}(x)= \sum_{j=1}^n{\proj{j}\otimes h(x_j)},
\end{equation}
where each $\{h(y)\}_{y\in\Sigma}$ is hermitian and satisfies $\|h(y)\|\leq 1$. In particular, the choice $h(y)=\proj{y^-}$, where
\begin{equation}\label{defPM}
 \ket{y^\pm}=\frac{1}{\sqrt{2}}(\ket{\bar{0}}\pm\ket{y}),
\end{equation}
can be considered as the Hamiltonian analogue of the unitary oracle $O_x$ in equation~(\ref{eq:uniQ}), since it is easy to check that $O_x=e^{-iH_\mathcal{Q}(x)\Delta T}$ for $\Delta T=\pi$. A query algorithm in this model then corresponds to applying a Hamiltonian $H_x(t)$ of the form
\begin{equation}\label{hami}
H_x(t) = H_\mathcal{D}(t)+ \alpha(t)H_\mathcal{Q}(x) 
\end{equation}
where $H_\mathcal{D}(t)$ is the driver Hamiltonian independent of the input $x$, and $\abs{\alpha(t)}\leq 1$  for all $t\in[0,T]$.  The continuous-time quantum query complexity $Q^\ct_0(\rho,\sigma)$ is the minimum computing time $T$ of any such algorithm converting $\rho$ to $\sigma$ exactly.

For scenarios where we accept errors, we must distinguish two cases : \textit{coherent} and \textit{non-coherent} quantum state conversion. Concretely, a computation will typically use some extra workspace and may therefore generate a state  $\ket{\sigma_x, J_x}$, where $\ket{J_x}$ is the final state of the workspace. This might not be desirable if the state generation is used as a subroutine in a larger quantum algorithm, where we would like to use interferences between the states $\ket{\sigma_x}$ for different $x$'s. In that case, we would like to be able to reset the state $\ket{J_x}$ to a default state, so that it does not affect interferences.

We therefore define the following output conditions (both for the discrete- and continuous-time models)
\begin{dfn}[Output condition]\label{def:output}
A quantum query algorithm acting as unitary $\mathcal{U}_x$ for input $x$ converts  $\rho$ to $\sigma$  with error at most $\varepsilon$ if
\begin{itemize}
\item (coherent case) $\forall x\in \X, \, \Re( \braket{\sigma_x,0}{\mathcal{U}_x\vert\rho_x,0})\geq \sqrt{1-\varepsilon}$,
\item (non-coherent case) $\forall x\in \X, \, \exists \ket{J_x}, \, \Re(\braket{\sigma_x,J_x}{\mathcal{U}_x\vert\rho_x,0})\geq \sqrt{1-\varepsilon}$.
\end{itemize}
\end{dfn}

Note that a sufficient condition for $\Re( \braket{\phi}{\psi})\geq \sqrt{1-\varepsilon}$ is that these states are $\sqrt{\eps}$-distant. Moreover, the output condition for the coherent case has been shown~\cite{LR13} to be equivalent to $\mathcal{F}_H(\sigma,\sigma')\geq \sqrt{1-\epsilon}$, where $\sigma'$ is the Gram matrix of the output states $\ket{\sigma'_x}=\mathcal{U}_x\ket{\rho_x,0}$. Similarly, in the non-coherent case the output conditions can be rewritten as $\mathcal{F}_H(\sigma\circ J,\sigma')\geq \sqrt{1-\epsilon}$, where $J$ is any Gram matrix of unit vectors (corresponding to any set of states $\ket{J_x}$). This implies that bounded-error and zero-error quantum query complexities are related as follows.
\begin{lem}[\cite{LR13}]\label{lem:bounded-error}
 For any $|\X|$-by-$|\X|$ Gram matrices $\rho,\sigma$, we have
 \begin{align}
Q^\bullet_\eps(\rho,\sigma)&=\min_{\sigma'} \left\{Q^\bullet_0(\rho,\sigma'):\mathcal{F}_H(\sigma,\sigma')\geq \sqrt{1-\epsilon}\right\}\\
Q^{\nc,\bullet}_\eps(\rho,\sigma)&=\min_{\sigma'} \left\{Q^{\bullet}_0(\rho,\sigma'):\mathcal{F}_H(\sigma\circ J,\sigma')\geq \sqrt{1-\epsilon}, J\circ \bbone =\bbone\right\}
\end{align}
where the superscript $\nc$ denotes the non-coherent query complexity (otherwise we consider the coherent case by default), and the superscript $\bullet$ is either $\dt$ or $\ct$.
\end{lem}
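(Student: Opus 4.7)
The plan is to deduce each equality from the equivalence, already stated in the paragraph preceding the lemma and attributed to~\cite{LR13}, between the pointwise output condition in Definition~\ref{def:output} and the Hadamard-product-fidelity condition on the output Gram matrix $\sigma'$ whose entries are $\sigma'_{x,y}=\braket{\rho_x,0}{\mathcal{U}_x^*\mathcal{U}_y|\rho_y,0}$. Once that equivalence is in hand, the two sides of each claimed identity can be matched algorithm-by-algorithm, and the proof reduces to showing two inequalities in each case.

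For the coherent equality, I would first establish $\geq$ by taking any algorithm $\mathcal{U}_x$ realizing $Q^\bullet_\eps(\rho,\sigma)$ queries (or runtime) and defining $\sigma'$ as the Gram matrix of its actual outputs $\ket{\sigma'_x}=\mathcal{U}_x\ket{\rho_x,0}$. By construction this algorithm converts $\rho$ to $\sigma'$ with zero error, so $Q^\bullet_0(\rho,\sigma')\leq Q^\bullet_\eps(\rho,\sigma)$, and by the cited equivalence the output condition forces $\mathcal{F}_H(\sigma,\sigma')\geq\sqrt{1-\eps}$, so $\sigma'$ is feasible for the minimization. Conversely, for $\leq$, pick any feasible $\sigma'$ and any zero-error algorithm converting $\rho$ to $\sigma'$; since the actual output Gram matrix $\sigma'$ satisfies $\mathcal{F}_H(\sigma,\sigma')\geq\sqrt{1-\eps}$, the equivalence gives that the same algorithm converts $\rho$ to $\sigma$ with error at most $\eps$, so $Q^\bullet_\eps(\rho,\sigma)\leq Q^\bullet_0(\rho,\sigma')$.

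The non-coherent equality is proved by the same two-direction argument, but now the workspace register must be tracked. Given an algorithm for $Q^{\nc,\bullet}_\eps(\rho,\sigma)$ producing output states $\ket{\sigma'_x,J_x}$ with Gram matrix $\sigma'$ on the output register and unit workspace vectors $\ket{J_x}$ with Gram matrix $J$, the output condition in Definition~\ref{def:output} is equivalent (by the coherent equivalence applied to the combined register) to $\mathcal{F}_H(\sigma\circ J,\sigma')\geq\sqrt{1-\eps}$; since $\ket{J_x}$ is normalized one has $J\circ\bbone=\bbone$. Conversely, any zero-error algorithm producing $\sigma'$ with $\mathcal{F}_H(\sigma\circ J,\sigma')\geq\sqrt{1-\eps}$ for some such $J$ satisfies the non-coherent output condition with workspace states chosen to realize the Gram matrix $J$ (such states exist because $J$ is positive semidefinite with unit diagonal).

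The only place requiring care will be the last step: recovering explicit workspace states $\ket{J_x}$ from the abstract Gram matrix $J$ and combining them with the algorithm's actual output to match the Hadamard product $\sigma\circ J$. This is essentially the standard fact that Gram matrices determine state families up to a global isometry, already invoked in Section~\ref{sect:query} to justify representing state-conversion problems by Gram matrices; invoking it here on an ancillary register disjoint from the algorithm's computational space closes the loop without adding any query cost, since the isometry is independent of $x$.
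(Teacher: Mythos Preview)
The paper does not actually prove Lemma~\ref{lem:bounded-error}; it is stated with a citation to~\cite{LR13} and no proof is given. There is therefore no in-paper argument to compare your proposal against.

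Your sketch is the standard argument and is essentially correct. One small point worth tightening in the $\leq$ direction: the equivalence you invoke is not that a \emph{fixed} algorithm satisfying $\mathcal{F}_H(\sigma,\sigma')\geq\sqrt{1-\eps}$ automatically satisfies the pointwise inequality $\Re(\braket{\sigma_x,0}{\sigma'_x})\geq\sqrt{1-\eps}$ for the given representatives, but rather that it does so after an input-independent unitary on the output register (this is exactly the ``Gram matrices determine states up to a global isometry'' fact you cite at the end). You implicitly use this in the coherent case as well, so it would be cleaner to state it once up front rather than only in the non-coherent paragraph. With that caveat, the argument goes through.
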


Computing a function $f$ is equivalent to generating the Gram matrix $F_{x,y}=\delta_{f(x),f(y)}$ from the all-1 Gram matrix $\allone_{x,y}=1$. In that case, it is not necessary to generate the state coherently, but one can convert a non-coherent algorithm into a coherent algorithm, so that we can consider the coherent case without loss of generality.
\begin{lem}[\cite{LR13}]\label{lem:coherent-function}
 For any function $f$ and associated Gram matrix $F_{x,y}=\delta_{f(x),f(y)}$, we have $Q^\bullet_\eps(f)=Q^{\nc,\bullet}_\eps(\allone,F)$ and
\begin{align*}
 Q^{\nc,\bullet}_\eps(\allone,F)\leq Q^{\bullet}_\eps(\allone,F)\leq 2 Q^{\nc,\bullet}_{1-\sqrt{1-\eps}}(\allone,F).
\end{align*}

\end{lem}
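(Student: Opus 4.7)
My plan is to prove the three statements of the lemma separately, as each captures a distinct aspect of the coherent/non-coherent distinction. The equality $Q^\bullet_\eps(f) = Q^{\nc,\bullet}_\eps(\allone,F)$ should follow by unpacking definitions: any algorithm computing $f$ starts from an $x$-independent initial state (corresponding to $\rho=\allone$) and places $f(x)$ in an output register while being free to leave arbitrary workspace $\ket{J_x}$ behind, which is precisely the non-coherent output condition with Gram matrix $F_{x,y}=\delta_{f(x),f(y)}$; conversely, measuring the output register of a non-coherent algorithm converting $\allone$ to $F$ returns $f(x)$ with matching success probability. The weaker inequality $Q^{\nc,\bullet}_\eps(\allone,F)\leq Q^{\bullet}_\eps(\allone,F)$ is immediate since a coherent algorithm trivially meets the non-coherent output condition by taking $\ket{J_x}=\ket{0}$.

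The substantive part is the upper bound $Q^{\bullet}_\eps(\allone,F)\leq 2\,Q^{\nc,\bullet}_{1-\sqrt{1-\eps}}(\allone,F)$, which I would establish via the standard uncomputation trick. Given a non-coherent algorithm $\mathcal{U}_x$ of cost $T$ with error $\eps':=1-\sqrt{1-\eps}$, producing $\ket{\psi_x}:=\mathcal{U}_x\ket{0,0}$, I would define the coherent algorithm
\[
\mathcal{U}'_x \;=\; (\mathcal{U}_x^\dagger\otimes\bbone)\cdot C\cdot (\mathcal{U}_x\otimes\bbone),
\]
where $C$ copies the output register into a fresh ancilla via a basis-preserving controlled-NOT-type operation. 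In the discrete-time model, $\mathcal{U}_x^\dagger$ is realized by reversing the sequence of input-independent unitaries and inverting each oracle call, doubling the query count; in the continuous-time model, it is realized by evolving under $-H_x(T-t)$ for $t\in[0,T]$, doubling the total evolution time. A direct computation, writing $\Pi_{f(x)}$ for the projector onto $\ket{f(x)}$ in the output register, yields
\[
\matrixel{0,0,f(x)}{\mathcal{U}'_x}{0,0,0} \;=\; \|\Pi_{f(x)}\ket{\psi_x}\|^2 \;\geq\; |\braket{f(x),J_x}{\psi_x}|^2 \;\geq\; 1-\eps',
\]
since the non-coherent output condition of Definition~\ref{def:output} forces $|\braket{f(x),J_x}{\psi_x}|\geq \sqrt{1-\eps'}$. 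The choice $\eps'=1-\sqrt{1-\eps}$ is tuned so that $1-\eps'=\sqrt{1-\eps}$, exactly meeting the coherent output condition with error $\eps$.

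The main obstacle I anticipate is not the algebra but rather verifying cleanly that the reversed evolution fits the continuous-time query Hamiltonian template of equation~(\ref{hami}): one must check that the reversed driver $-H_\mathcal{D}(T-t)$ remains a legitimate input-independent Hamiltonian and that the reversed oracle coefficient $-\alpha(T-t)$ still satisfies $|\alpha(t)|\leq 1$, so that the concatenated schedule of total length $2T$ is a valid continuous-time query algorithm in the sense of equation~(\ref{hami}); the analogous check in the discrete-time model is routine. A minor secondary concern is ensuring that the copy operation $C$ is input-independent and requires no oracle access, which is clear since it acts as a permutation on the computational basis of the output and ancilla registers alone.
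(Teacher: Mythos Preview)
The paper does not prove this lemma; it is quoted from \cite{LR13} without argument. Your proposal supplies the standard proof: the identification $Q^\bullet_\eps(f)=Q^{\nc,\bullet}_\eps(\allone,F)$ by unwinding Definition~\ref{def:output}, the trivial inequality with $\ket{J_x}=\ket{0}$, and the uncomputation construction $\mathcal{U}'_x=(\mathcal{U}_x^\dagger\otimes\bbone)\,C\,(\mathcal{U}_x\otimes\bbone)$ for the nontrivial direction. Your key computation $\matrixel{0,0,f(x)}{\mathcal{U}'_x}{0,0,0}=\|\Pi_{f(x)}\ket{\psi_x}\|^2\geq 1-\eps'=\sqrt{1-\eps}$ is correct, and the verification that the time-reversed evolution $-H_x(T-t)=-H_\mathcal{D}(T-t)-\alpha(T-t)H_\mathcal{Q}(x)$ fits the template of equation~(\ref{hami}) is exactly the right check for the continuous-time case.

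One small point you leave implicit: in the continuous-time model the copy gate $C$ must itself be realized by Hamiltonian evolution, which in principle costs time. This is harmless here because the paper places no norm bound on the driver $H_\mathcal{D}(t)$, so any input-independent unitary can be implemented in arbitrarily small time by scaling the corresponding Hermitian generator; hence the total cost remains $2T$ (or $2T+o(1)$ if one insists on bounded drivers, which does not affect the asymptotic statement).
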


\subsection{Adversary methods}
\label{sect:method}

The quantum adversary method is one of main methods to prove lower bounds on quantum query complexity (the other main method is the polynomial method~\cite{BBC+01}). Its basic principle is rather simple: it consists in defining a so-called progress function $W$ whose value is high at the beginning of the algorithm and should be low at the end of the algorithm if it is successful. By bounding the change in the progress function for each oracle call, one then bounds the minimum number of oracle calls necessary for success.

More precisely, let $\ket{\phi_x(t)}$ be the state of the algorithm on input $x$ after $t$ queries, and $\Phi_t$ be the Gram matrix of those states. We define a progress function
\begin{align*}
 W(\Phi_t)=\scprod{\Gamma\circ vv^*}{\Phi_t},
\end{align*}
where $\Gamma$ is a $\vert \X\vert$-by-$\vert \X\vert$ hermitian matrix, called the \textit{adversary matrix}, and $v$ a unit vector. We also define the matrices $\Delta_j$ with entries $(\Delta_j)_{x,y}=1-\delta_{x_j,y_j}$. The adversary method relies on the fact that if $\Gamma$ is chosen so that it satisfies $\| \Gamma \circ \Delta_j\| \leq 1$ for all $j\in [n]$, then the progress function can only increase by one after each query (see e.g.~\cite{HLS07}), that is, $\vert W(\Phi_{t+1})-W(\Phi_t) \vert\leq 1$. The difference of the values of the progress function between $\Phi_0=\rho$ and $\Phi_T=\sigma$ is then given by
\begin{align*}
 W(\Phi_0)-W(\Phi_T)=\scprod{\Gamma\circ vv^*}{\rho-\sigma}=\scprod{\Gamma\circ(\rho-\sigma)}{vv^*}\leq T
\end{align*}
By optimizing over $\Gamma$ and $v$, we obtain the adversary bound
\begin{dfn}\cite{LMRSS11,LR13}(Adversary bound) \label{def:adv}
\[
\begin{split}
\advstar(\rho,\sigma) &= \max_{\Gamma} \left\|\Gamma \circ(\rho-\sigma)\right\|\hspace{1.5cm} \text{subject to} \hspace{1cm}   \forall j\in[n], \hspace{0.3cm}\| \Gamma \circ \Delta_j\| \leq 1,\\
&= \gamma_2(\rho-\sigma|\Delta) \qquad\qquad\qquad\text{where} \qquad\Delta= \{\Delta_1,\dots,\Delta_n\}.
\end{split} \]
\end{dfn}

As shown in~\cite{LMRSS11}, $\advstar$ defines a distance between Gram matrices, sometimes called \textit{the query distance}. The following simple proposition, comparing the query distance to the Hadamard product distance $\mathcal{D}_H$, will be used in the proof of Theorem~\ref{thm:adv}.
\begin{prop}\label{lem:Dist} For any Gram matrices $\rho$, $\sigma$  of size $n$, $\mathcal{D}_H(\rho,\sigma)\leq \advstar(\rho,\sigma)$. 
\end{prop}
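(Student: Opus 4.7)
The plan is to exploit the primal form of $\advstar = \gamma_2(\rho-\sigma|\Delta)$ given in Definition~\ref{def:filter} to factor $(\rho-\sigma)\circ\ketbra{u}{u}$ as a product of two matrices with controlled Frobenius norms, and to conclude via the Schatten--H\"older inequality $\|P^{*}Q\|_{\tr}\leq\|P\|_{F}\|Q\|_{F}$. Set $\mu=\advstar(\rho,\sigma)$; by the primal characterization, one may fix vector families $\{\ket{u_{x,j}}\}$, $\{\ket{v_{y,j}}\}$ satisfying $(\rho-\sigma)_{x,y}=\sum_j (\Delta_j)_{x,y}\braket{u_{x,j}}{v_{y,j}}$ with both $\max_x\sum_j\|\ket{u_{x,j}}\|^2$ and $\max_y\sum_j\|\ket{v_{y,j}}\|^2$ bounded by $\mu$.

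Next, fix a unit vector $\ket{u}$ and set $M=(\rho-\sigma)\circ\ketbra{u}{u}$. The key observation is that the oracle states $\ket{y^\pm}$ of Eq.~(\ref{defPM}) provide the Gram-matrix representation $(\Delta_j)_{x,y}=1-\delta_{x_j,y_j}=2\braket{x_j^+}{y_j^-}$. Substituting this into $M_{x,y}=u_x\bar{u}_y(\rho-\sigma)_{x,y}$ and absorbing the scalars $u_x,\bar{u}_y$ inside the brackets, the entries of $M$ take the form $M_{x,y}=\braket{P_x}{Q_y}$ with $\ket{P_x}:=\sqrt{2}\sum_j \ket{j}\otimes(\bar{u}_x\ket{u_{x,j}})\otimes\ket{x_j^+}$ and $\ket{Q_y}:=\sqrt{2}\sum_j \ket{j}\otimes(\bar{u}_y\ket{v_{y,j}})\otimes\ket{y_j^-}$. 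Equivalently, $M=P^{*}Q$, where the $x$-th column of $P$ (resp.\ the $y$-th column of $Q$) is $\ket{P_x}$ (resp.\ $\ket{Q_y}$).

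Hence $\|M\|_{\tr}\leq \|P\|_F\|Q\|_F$. Using the orthogonality of the $\ket{j}$ together with $\|\ket{x_j^\pm}\|=1$, a direct computation gives $\|P\|_F^2=2\sum_x|u_x|^2\sum_j\|\ket{u_{x,j}}\|^2 \leq 2\mu\|\ket{u}\|^2=2\mu$, and symmetrically $\|Q\|_F^2\leq 2\mu$. Therefore $\|M\|_{\tr}\leq 2\mu$, and maximizing over $\ket{u}$ and dividing by $2$ yields $\mathcal{D}_H(\rho,\sigma)\leq \mu=\advstar(\rho,\sigma)$.

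The only delicate step is repackaging the filtered primal data into a single outer-product factorization of $M$, which requires expressing each $\Delta_j$ as a Gram matrix of bounded vectors. The decomposition $(\Delta_j)_{x,y}=2\braket{x_j^+}{y_j^-}$ given by the oracle basis of Eq.~(\ref{defPM}) does exactly this, at the price of a factor $2$ that is precisely compensated by the $\tfrac12$ appearing in the definition of $\mathcal{D}_H$.
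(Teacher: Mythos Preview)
Your proof is correct but proceeds along a genuinely different route from the paper's. The paper argues on the \emph{dual} (maximization) side: it rewrites the Hadamard product distance as $\mathcal{D}_H(\rho,\sigma)=\max_{\|P\|\leq 1/2}\|P\circ(\rho-\sigma)\|$, notices that this has the same form as $\advstar(\rho,\sigma)=\max_{\Gamma}\{\|\Gamma\circ(\rho-\sigma)\|:\|\Gamma\circ\Delta_j\|\leq 1\}$, and then shows that the constraint $\|P\|\leq 1/2$ implies $\|P\circ\Delta_j\|\leq 1$ for every $j$ via the triangle inequality and the bound $\gamma_2(\allone-\Delta_j)\leq 1$; the inequality then follows from inclusion of feasible sets. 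You instead work on the \emph{primal} (minimization) side: starting from an optimal filtered-$\gamma_2$ witness for $\rho-\sigma$, you combine it with the Gram identity $(\Delta_j)_{x,y}=2\braket{x_j^+}{y_j^-}$ to produce an explicit factorization $(\rho-\sigma)\circ\ketbra{u}{u}=P^*Q$ and bound the trace norm by $\|P\|_F\|Q\|_F$ via Schatten--H\"older. The paper's argument is a bit more self-contained (it needs only the $\gamma_2$ norm and Claim~\ref{cla:LR}, not the oracle basis or H\"older), whereas yours is more constructive and makes visible exactly how the factor $2$ from the oracle decomposition cancels the $\tfrac12$ in the trace distance. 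One small technical remark: you tacitly assume the filtered-$\gamma_2$ minimum is attained; if one wants to be fully rigorous one can take a witness of value $\mu+\delta$ and let $\delta\to 0$.
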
 
\begin{proof}
Since the trace distance may be written as $\mathcal{D}(\rho' ,\sigma' )= \max_{P:\|P\|\leq 1}\frac{1}{2}\left\langle P,(\rho'-\sigma')\right\rangle $, we can reformulate the Hadamard product distance in Definition \ref{def:trace} as
\begin{align*}
\mathcal{D}_H(\rho ,\sigma)= \max_{\substack{P :\|P\|\leq 1/2\\ \ket{u} :\|\ket{u}\|=1}}\left\langle P, (\rho-\sigma)\circ \ketbra{u}{u}\right\rangle=\max_{\substack{P :\|P\|\leq 1/2 }}\|P\circ (\rho-\sigma)\|.
\end{align*}

We observe that this form is similar to $\advstar$ in Definition \ref{def:adv}, except for the constraints on $P$ and $\Gamma$. We conclude the proof by showing that the constraint on $P$ is stronger, that is, if $\|P\|\leq 1/2$ then $\|P\circ \Delta_i\|\leq 1$ for all $i\in [n]$.

Let $\allone$ be the all-one matrix, and $i\in[n]$. We have
\begin{align*}
\|P\circ \Delta_i\|\leq \|P\| + \|P\circ (\allone-\Delta_i)\|\leq \Big(1+\gamma_2(\allone-\Delta_i)\Big)\|P\|,
\end{align*}
where the inequalities follows from the triangle inequality and Claim \ref{cla:LR}, respectively. We finally bound $\gamma_2(\allone-\Delta_i)$ using the minimization form in Definition~\ref{def:gamma2} and an appropriate choice for $\{\ket{u_x},\ket{v_x}\}_x$. Choosing $\ket{u_x}=\ket{v_x}=\ket{x_i}$, we have $\braket{u_x}{v_y}=(\allone-\Delta_i)_{x,y}=\delta_{x_i,y_i}$, so that $\gamma_2(\allone-\Delta_i)\leq 1$.
\end{proof}

\subsection{Adiabatic quantum computation}\label{sect:adia}
Adiabatic quantum computation is a quantum computational model originally proposed by Farhi et al. \cite{FGGS00} for solving instances of the satisfiability problem.  This model is based on the quantum adiabatic theorem introduced by Born and Fock \cite{BF28} and describing a physical system evolving under a slowly varying Hamiltonian:

\begin{center}
\textit{A quantum system with a time-dependent Hamiltonian remains in its instantaneous eigenstate if the Hamiltonian variation is slow enough and there is a large gap between its eigenvalue and the rest of the spectrum of the Hamiltonian.}
\end{center}
It was later proved that the adiabatic model is equivalent to standard quantum computation \cite{AVK+04}. This statement, as well as the correctness of most adiabatic algorithms, rely on the existence of a spectral gap.

In order to formally describe adiabatic quantum computation, let us first define the notion of adiabatic process.
\begin{dfn}\label{def:adia}
An \textbf{adiabatic process} on the Hilbert space $\mathcal{H}$ is defined by a triplet $\{H(s),P(s),\tau\}$ with $s\in [0,1]$ where\vspace{0.2cm}
\begin{enumerate}
\item $H(s)$ is a twice differentiable map from $[0,1]$ to the space of bounded linear self-adjoint operators $B(\mathcal{H})$, equipped with the graph norm of $H(0)$.
\item $P(s)$ are a family of orthogonal rank-one projections onto an eigenvector of $H(s)$ with continuous eigenvalue $\lambda(s)$,
\item $\tau\in\mathbb{R}^+$ is the time scale, which defines the time as $t(s)=s\tau$.
\end{enumerate}
\end{dfn}

For such an adiabatic process, we can define the unitary operator $U_A(s)$ corresponding to an idealized evolution, which maps the eigenvector in the range of $P(0)$ to the eigenvector in the range of $P(s)$, that is, $U_A(s)P(0)U_A^*(s)=P(s)$. Furthermore, the physical evolution, represented by unitary operator $U_\tau(s)$, can be obtained from the Schr\"odinger equation
\begin{equation}\label{tau}
i\partial_s U_\tau(s)=\tau H(s)U_\tau(s).
\end{equation}
Let us note that the analytical conditions given in Definition \ref{def:adia}
ensure the existence and uniqueness of the solution $U_\tau(s)$ of this equation with initial condition $U_\tau(0)=\bbone$~\cite{reed1975methods}. 

The quantum adiabatic theorem can be summarized by the following statement
\begin{align*}
 \lim_{\tau \rightarrow \infty}U_\tau(s)P(0)=U_A(s)P(0)=P(s)U_A(s).
\end{align*}

Thus $U_\tau(s) P(0)$ converge to $U_A(s)P(0)$ for large $\tau$, and the norm of their difference defines the error of the adiabatic process. 
\begin{dfn}
The \textbf{error} $\eps_{AP}(s)$ of an adiabatic process $\{H(s),P(s),\tau\}$ is defined as
\[
\eps_{AP}(s)=\big\|\big[U_\tau(s)-U_A(s)\big]P(0)\big\|, \qquad \text{with} \quad \eps_{AP}=\eps_{AP}(1).\]
\end{dfn}
This definition implies that at the end of the adiabatic evoltion, the physical state will be $\eps_{AP}$-distant from the ideal state.

How slow should the process be, or, equivalently, how large should $\tau$ be, to ensure a small enough adiabatic error? The \textit{folk adiabatic condition} requires the following bound:
\begin{equation}\label{folk}
\tau >\!\!> \int_0^1{\frac{\|\partial_s H_\tau (s)\|}{g(s)^2}ds},
\end{equation}
where the gap $g(s)$ represents the minimal distance between the eigenvalue $\lambda(s)$ and the rest of spectrum of $H(s)$. However this \textit{folk adiabatic condition} is not always sufficient, but rigorous conditions have been given e.g. by Jansen et al. \cite{JRS07}. Indeed, they proved the following statement (where we introduce the notation $\dot{A}(s)=\partial_s A(s)$).
\begin{thm}\cite{JRS07}
Let $\{H(s),P(s),\tau\}$ be an adiabatic process with a gap $g=\min_{s\in[0,1]}g(s)$,     $\dot{H}$, $\ddot{H}$ are bounded operators, and $\eps>0$, if
\begin{align*}
\tau \geq  \frac{1}{\eps}\Big[\frac{\|\dot{H}(0)\|+\|\dot{H}(1)\|}{g^2}+\max_{s\in [0,1]}\frac{\|\ddot{H}(s)\|^2}{g^2}+7\frac{\|\dot{H}(s)\|^2}{g^3}\Big]  , 
\end{align*}
then $\eps_{AP}\leq\eps$.
\end{thm}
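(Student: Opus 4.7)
The plan is to follow the classical reduced-resolvent approach. First, introduce Kato's adiabatic generator $K(s)=\tau H(s)+i[\dot P(s),P(s)]$ and let $U_A(s)$ be defined by $i\partial_s U_A=K U_A$ with $U_A(0)=\bbone$; differentiating $P^2=P$ yields $\dot P=\dot P\,P+P\,\dot P$, from which one checks directly that $U_A(s)P(0)U_A^*(s)=P(s)$, so in particular the composition $U_\tau U_A^*$ is the object that measures the adiabatic error on the range of $P(0)$. Subtracting the two Schr\"odinger equations gives
\[
i\partial_s(U_\tau-U_A)=\tau H(s)\,(U_\tau-U_A)-i[\dot P(s),P(s)]\,U_A(s),
\]
and Duhamel's formula then produces the integral representation
\[
[U_\tau(1)-U_A(1)]P(0)=-\int_0^1 U_\tau(1)U_\tau^*(s)\,[\dot P(s),P(s)]\,U_A(s)\,P(0)\,ds.
\]

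The key step, where the factor $1/\tau$ appears, is to introduce the reduced resolvent $R(s)$ of $H(s)-\lambda(s)$ on the range of $Q(s):=\bbone-P(s)$, together with an auxiliary operator $F(s)$ satisfying the commutator identity $[H(s),F(s)]=[\dot P(s),P(s)]$; a standard explicit choice is $F(s)=R(s)[\dot P(s),P(s)]-[\dot P(s),P(s)]R(s)$. Inside the integrand one then uses
\[
U_\tau^*(s)[H(s),F(s)]U_\tau(s)=\frac{i}{\tau}\partial_s\bigl[U_\tau^*(s)F(s)U_\tau(s)\bigr]-\frac{i}{\tau}U_\tau^*(s)\dot F(s)U_\tau(s)
\]
to integrate by parts. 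Replacing $U_A$ by $U_\tau$ inside the integrand costs only a higher-order correction, controlled by a Gr\"onwall argument, since $U_A-U_\tau$ is itself of order $1/\tau$.

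Finally, I would bound the three resulting contributions using the standard estimates $\|R(s)\|\le 1/g$ and $\|\dot P(s)\|\le \|\dot H(s)\|/g$, together with an estimate for $\dot F$ obtained by differentiating the reduced resolvent. The boundary terms at $s=0,1$ contribute $(\|\dot H(0)\|+\|\dot H(1)\|)/g^2$; the $\dot F$ term produces both $\max_s\|\ddot H(s)\|/g^2$ and an additional contribution of order $\|\dot H(s)\|^2/g^3$; and the Gr\"onwall correction contributes again at order $\|\dot H(s)\|^2/g^3$. Dividing everything by $\tau$ and requiring the total to be at most $\eps$ then yields the stated lower bound on $\tau$.

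The main obstacle will not be any individual estimate but the careful bookkeeping required to obtain the precise constants stated, in particular the coefficient $7$ in front of $\|\dot H(s)\|^2/g^3$: one must separately and tightly track the contribution of $\dot F$, of the commutator $[F(s),[\dot P(s),P(s)]]$ that arises after integration by parts, and of the Gr\"onwall correction from swapping $U_A$ and $U_\tau$. This is precisely the point where published versions of the adiabatic theorem with gap (Kato, Avron--Elgart, Jansen--Ruskai--Seiler) diverge in their numerical constants, so reproducing the bound exactly as stated requires following the Jansen--Ruskai--Seiler accounting.
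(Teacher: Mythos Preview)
The paper does not prove this theorem. It is stated in the preliminaries (Subsection~\ref{sect:adia}) purely as background, with a citation to Jansen--Ruskai--Seiler~\cite{JRS07}, and is then immediately set aside in favour of the Avron--Elgart lemma (Lemma~\ref{adiBound}), which \emph{is} proved in Appendix~\ref{sect:bound} and is what the algorithm actually relies on. So there is no ``paper's own proof'' to compare your proposal against.

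That said, your sketch is a reasonable outline of the standard reduced-resolvent argument and is indeed the route taken in~\cite{JRS07}: Kato's intertwiner $U_A$, the commutator identity $[H,F]=[\dot P,P]$ with $F$ built from the reduced resolvent, integration by parts to extract the $1/\tau$, and then norm bounds $\|R\|\le 1/g$, $\|\dot P\|\le\|\dot H\|/g$. You have correctly identified that the only delicate part is the bookkeeping for the constant~$7$; note also that the statement as reproduced in the paper appears to carry a typo ($\|\ddot H(s)\|^2$ rather than $\|\ddot H(s)\|$), so do not try to match that exponent. If your goal is to understand the paper's adiabatic analysis, the relevant proof to study is the one in Appendix~\ref{sect:bound}, not this cited theorem.
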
  

 The adiabatic process used in our algorithm introduced in Section~\ref{sect:adiaconvert} does not necessarily exhibit a gap, and for this reason we use another lemma from Avron and Elgart \cite{Avron1998}.
\begin{lem}\cite{Avron1998}\label{adiBound}
Let $\{H(s),P(s),\tau\}$ be an adiabatic process and $\eps>0$. Suppose that the commutator equation 
\begin{equation}\label{eq:comm}
\dot{P}(s)P(s)=[H(s),X(s)]
\end{equation}
accepts as solution operator $X(s)$ such that both $X(s)$ and $\dot{X}(s)$ are bounded. If
\begin{align*}
 \tau \geq \max_{s\in [0,1]}\frac{1}{\eps}\Big[2\|X(s)\|+\|\dot{X}(s)P(s)\|\Big] ,
\end{align*}
then $\eps_{AP}\leq\eps$.
\end{lem}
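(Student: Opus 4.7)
The plan is to introduce a comparison unitary $U_A$ realizing the ideal adiabatic transport, express the adiabatic error as an integral via the commutator equation (\ref{eq:comm}), and integrate by parts to extract the factor $1/\tau$ from the Schr\"odinger equation. A convenient choice for $U_A$ is the Kato-type intertwiner satisfying $i\partial_s U_A = (\tau H(s)+i[\dot P(s),P(s)])\,U_A$ with $U_A(0)=\bbone$: using $[H,P]=0$ (because $P(s)$ is an eigenprojector of $H(s)$) and the algebraic identity $[[\dot P,P],P]=\dot P$, one verifies that this ODE preserves $U_A(s)P(0)U_A^*(s)=P(s)$, so $U_A$ realizes the idealized evolution of Definition~\ref{def:adia}. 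Setting $W_\tau(s)=U_\tau^*(s)U_A(s)$, the adiabatic error becomes $\eps_{AP}(s)=\|(U_\tau-U_A)P(0)\|=\|(W_\tau(s)-\bbone)P(0)\|$.

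The first key calculation exploits that the generators of $U_\tau$ and $U_A$ differ exactly by $i[\dot P,P]$, so the $\tau H$ contributions cancel in $\partial_s W_\tau = U_\tau^*[\dot P,P]U_A$. Applying $P(0)$ on the right and using $P\dot P P=0$ together with the intertwining reduces this to $\partial_s(W_\tau P(0)) = U_\tau^*\dot P P\,U_A$, into which I substitute the hypothesis $\dot P P=[H,X]$ to obtain
\[
(W_\tau(s)-\bbone)P(0) = \int_0^s U_\tau^*(r)\,[H(r),X(r)]\,U_A(r)\,dr.
\]

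The second key step is integration by parts. The Schr\"odinger equation gives $U_\tau^* H = -\frac{i}{\tau}\partial_r U_\tau^*$, so the $HX$ piece of the commutator becomes a total derivative acting on $U_\tau^* X U_A$ up to a $\dot X$ term; treating the $XH$ piece symmetrically via the defining ODE of $U_A$ and collecting, the error identity repackages as
\[
(W_\tau(s)-\bbone)P(0) = -\frac{i}{\tau}\bigl[U_\tau^*(s)X(s)U_A(s)P(0)-X(0)P(0)\bigr] + \frac{i}{\tau}\int_0^s U_\tau^*\dot X\,U_A P(0)\,dr + R_\tau(s),
\]
where $R_\tau(s)$ is a residue discussed below. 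Using the intertwining $U_A P(0)=P(s)U_A$, the norms of the boundary terms and the integrand collapse to $\|X(s)P(s)\|\le\|X(s)\|$, $\|X(0)P(0)\|\le\|X(0)\|$, and $\|\dot X(r)P(r)\|$ respectively; since $s\in[0,1]$, bounding the integral by its supremum yields an estimate of the form $\tfrac{1}{\tau}\bigl[2\|X\|_\infty+\|\dot X P\|_\infty\bigr]$, which is less than $\eps$ under the hypothesis on $\tau$.

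The main obstacle is the residual term $R_\tau(s)=\tfrac{i}{\tau}\int_0^s U_\tau^* X[\dot P,P]U_A P(0)\,dr$ produced by the $i[\dot P,P]$ piece of the generator of $U_A$ when one moves $H$ into a Schr\"odinger derivative. A gap-based proof would absorb $R_\tau$ through a bounded reduced resolvent; here one must instead re-use the commutator equation to rewrite $[\dot P,P]U_A P(0)=\dot P P\,U_A=[H,X]U_A$ and perform a second integration by parts, which causes the $R_\tau$ contribution either to cancel against the boundary terms already accounted for or to rearrange into quantities already bounded by $\|X(s)\|$ and $\|\dot X(s) P(s)\|$. Controlling this residue without a spectral gap is the key technical novelty of Avron and Elgart, and it is precisely what allows the lemma to apply to the gapless adiabatic process appearing in the \textbf{AdiaConvert} algorithm of Section~\ref{sect:adiaconvert}.
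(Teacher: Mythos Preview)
Your overall architecture matches the paper's proof in the Appendix: both introduce a Kato--type intertwiner $U_A$, rewrite $\eps_{AP}(s)=\|(U_\tau^*U_A-\bbone)P(0)\|$ as an integral of $U_\tau^*\,\dot P P\,U_A=U_\tau^*[H,X]U_A$, and then integrate by parts using the Schr\"odinger equation to extract the $1/\tau$. The only structural difference is that the paper takes the generator of $U_A$ to be $\tau\lambda(s)\bbone+i[\dot P,P]$ rather than your $\tau H(s)+i[\dot P,P]$, and then peels off the scalar phase $\Phi(s)$ so that $V_A=\Phi^*U_A$ obeys the $\tau$--free equation $\dot V_A=[\dot P,P]V_A$. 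After your cancellation of the two $\tau H$ contributions, the computations coincide line for line.

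The genuine gap is your treatment of the residue $R_\tau(s)=\tfrac{i}{\tau}\int_0^s U_\tau^* X[\dot P,P]U_A P(0)\,dr$. A second integration by parts is the wrong move: it produces $O(1/\tau^2)$ terms involving $X^2$, $X\dot X$, etc., which do not collapse to the stated bound $\tfrac{1}{\tau}(2\|X\|+\|\dot X P\|)$. What actually happens---and what the paper does---is that $R_\tau$ vanishes identically. By intertwining and $P\dot P P=0$ one has $[\dot P,P]U_AP(0)=\dot P P\,U_A$, so the integrand is $U_\tau^*X\dot P P\,U_A$. Now observe that one may take $X=XP$: if $X$ solves $\dot P P=[H,X]$ then so does $XP$ (because $[H,P]=0$ gives $[H,XP]=[H,X]P=\dot P P$), and in the application to \textbf{AdiaConvert} the operator $X_x=\tfrac{\pi}{2N_x}\ket{\Psi_x^-}\!\bra{\psi_x^+}$ already satisfies $X_x=X_xP_x$. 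With $X=XP$ the integrand becomes $U_\tau^*XP\dot P P\,U_A=0$, and the bound follows directly from the boundary term and the $\dot X$ term.

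Finally, your framing of the Avron--Elgart novelty is off. The residue control is entirely elementary and uses nothing about the spectrum; the point of \cite{Avron1998} is upstream, namely that the \emph{existence} of a bounded $X$ solving $\dot P P=[H,X]$ replaces the gap hypothesis (with a gap one would build $X$ from the reduced resolvent). Once such an $X$ is given, the rest is the standard Kato calculation you carried out.
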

This version of the lemma is actually a special case of the statement proved by Avron and Elgart, adapted to the case of continuous-time quantum computation. For completeness we reproduce a self-contained proof of this version of the lemma in Appendix~\ref{sect:bound}.

\section{Adversary lower bound in the continuous-time model}\label{sect:ct-adversary}
In this section we give a direct proof that the adversary method $\advstar(\rho,\sigma)$ is a lower-bound for the zero-error quantum query complexity in the continuous-time model.
\begin{thm}\label{thm:ct-adversary}
 For any $|\X|$-by-$|\X|$ Gram matrices $\rho,\sigma$, we have
\begin{align*}
 Q_0^\ct(\rho,\sigma)&\geq\frac{1}{2}\advstar(\rho,\sigma),\\
 Q_\eps^\ct(\rho,\sigma)&\geq\frac{1}{2}\min_{\sigma':\mathcal{F}_H(\sigma,\sigma')\geq \sqrt{1-\epsilon}}\advstar(\rho,\sigma').
\end{align*}
\end{thm}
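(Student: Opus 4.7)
The plan is to lift the standard adversary argument to continuous time by replacing the per-query difference of the progress function with a time derivative obtained from Schrödinger's equation --- the Ehrenfest-theorem moment alluded to in the introduction. Fix a feasible pair $(\Gamma, v)$: a hermitian $\Gamma$ with $\|\Gamma\circ\Delta_j\|\leq 1$ for all $j\in[n]$, and a unit vector $v$ achieving the optimum in $\|\Gamma\circ(\rho-\sigma)\| = \max_{\|v\|=1}|\bra{v}(\Gamma\circ(\rho-\sigma))\ket{v}|$ (valid since $\Gamma\circ(\rho-\sigma)$ is hermitian). Let $\Phi_t$ be the Gram matrix of the algorithm states $\ket{\phi_x(t)}$ evolving under $H_x(t) = H_\mathcal{D}(t) + \alpha(t) H_\mathcal{Q}(x)$, with boundary values $\Phi_0=\rho$ and $\Phi_T=\sigma$ for zero-error conversion. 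Define the progress function $W(t) = \langle\Gamma\circ vv^*, \Phi_t\rangle$. The goal is to show $|\dot W(t)|\leq 2$, so that integration over $[0,T]$ produces $\advstar(\rho,\sigma)\leq 2T$ after maximization over $(\Gamma, v)$.

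To compute $\dot W$, differentiate $\Phi_t$ entrywise via Schrödinger: $\partial_t\Phi_t(x,y) = i\bra{\phi_x}(H_x-H_y)\ket{\phi_y}$. The driver Hamiltonian cancels because it is input-independent, leaving
\[
\partial_t\Phi_t(x,y) = i\alpha(t)\sum_j(\Delta_j)_{xy}\bigl[\braket{h(x_j)\phi_{x,j}}{\phi_{y,j}} - \braket{\phi_{x,j}}{h(y_j)\phi_{y,j}}\bigr],
\]
where $\ket{\phi_{x,j}} = (\bra{j}\otimes I)\ket{\phi_x}$; the factor $(\Delta_j)_{xy}$ may be inserted freely because the bracketed expression already vanishes when $x_j=y_j$ (then $h(x_j)=h(y_j)$ and the two terms coincide by hermiticity of $h$). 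This decomposes $\partial_t\Phi_t = i\alpha(t)(M^{(1)}-M^{(2)})$ where each $M^{(k)}$ has exactly the structure appearing in the min-form of Definition~\ref{def:filter}, with $\ket{u^{(1)}_{x,j}} = h(x_j)\ket{\phi_{x,j}}$, $\ket{v^{(1)}_{y,j}} = \ket{\phi_{y,j}}$ (and the symmetric swap for $k=2$). Since $\|h(\cdot)\|\leq 1$ and $\sum_j\|\ket{\phi_{x,j}}\|^2 = \|\ket{\phi_x}\|^2 = 1$, both families meet the required norm constraints, yielding $\gamma_2(M^{(k)}|\Delta)\leq 1$.

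The dual max-form of the filtered $\gamma_2$ norm then gives $\|\Gamma\circ M^{(k)}\|\leq 1$ (the real symmetry of $\Delta_j$ ensures the feasibility of $\Gamma$ is inherited by $\overline{\Gamma}$). Combined with the rank-one identity $\langle\Gamma\circ vv^*, M\rangle = \bra{v}(\overline{\Gamma}\circ M)\ket{v}$ and the numerical-radius bound $|\bra{v}B\ket{v}|\leq \|v\|^2\|B\|$, this produces $|\langle\Gamma\circ vv^*, M^{(k)}\rangle|\leq 1$ and hence $|\dot W(t)|\leq 2|\alpha(t)|\leq 2$. Integrating and maximizing over $(\Gamma, v)$ delivers $\advstar(\rho,\sigma)\leq 2Q^\ct_0(\rho,\sigma)$. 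The bounded-error version follows immediately from Lemma~\ref{lem:bounded-error}: since $Q^\ct_\eps(\rho,\sigma) = \min_{\sigma':\mathcal{F}_H(\sigma,\sigma')\geq\sqrt{1-\eps}}Q^\ct_0(\rho,\sigma')$, the lower bound on each $Q^\ct_0$ transfers term by term to the minimum.

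The main obstacle I anticipate is the bookkeeping in the second step: ensuring each $M^{(k)}$ genuinely fits Definition~\ref{def:filter} with constant exactly $1$, and executing the transition from the inner product $\langle\Gamma\circ vv^*, M^{(k)}\rangle$ to the operator norm $\|\Gamma\circ M^{(k)}\|$ while keeping transposes and complex conjugates straight (noting that $\Gamma\circ vv^* = D_v \Gamma D_v^*$ is a diagonal conjugation of $\Gamma$ by the entries of $v$). Once the factor-of-$2$ bound on $|\dot W|$ is in hand, the remainder is routine integration together with invocation of Lemma~\ref{lem:bounded-error}.
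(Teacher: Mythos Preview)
Your proposal is correct and follows essentially the same approach as the paper: differentiate the progress function via Schr\"odinger/Ehrenfest, observe that the driver cancels, and bound the oracle contribution by~$2$ using a $\gamma_2$-type argument with explicit witness vectors built from $H_{\mathcal Q}(x)\ket{\phi_x}$ and $\ket{\phi_x}$. The only packaging differences are cosmetic: the paper introduces a virtual input register and writes the derivative as an Ehrenfest commutator $\avg{[\Gamma,H]}$, then bounds $\sum_j\gamma_2(\Phi_t^j)\leq 2$ with the \emph{plain} $\gamma_2$ norm while keeping the $\Delta_j$ factor in $\norm{\Gamma\circ\Delta_j}$; you instead differentiate the Gram matrix entrywise and split into two matrices $M^{(1)},M^{(2)}$ each with \emph{filtered} $\gamma_2(\cdot|\Delta)\leq 1$, which directly invokes the dual form of $\advstar$. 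Both routes produce the same witness vectors and the same factor of~$2$.
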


\begin{proof}
 Let $\ket{\phi_x(t)}$ be the state of the algorithm on input $x$ at time $t\in[0,T]$, and $\Phi_t$ be the Gram matrix of those states. Let $\Gamma$ be a $\vert \X\vert$-by-$\vert \X\vert$ hermitian matrix and $\ket{v}$ be a $\vert \X\vert$-dimensional unit vector. We consider the following superposition of states:
\begin{align*}
\ket{\Phi_t}= \sum_{x}v_x\ket{x}_\I\ket{\phi_x(t)}_\A \qquad \text{with} \qquad \tr_\A\ket{\Phi_t}\!\!\bra{\Phi_t}=\Phi_t\circ \ketbra{v}{v},
 \end{align*}
where $\A$ is the actual register of the algorithm, while $\I$ is a (virtual) input register that is introduced for the sake of analysis.

Since each state $\ket{\phi_x(t)}$ evolves under the influence of a Hamiltonian $H_x(t)$ as in Equation~(\ref{hami}), the state $\ket{\Phi_t}$ evolves under the influence of a global Hamiltonian
\begin{equation}
H(t) = \sum_x \proj{x}\otimes H_x(t).
\end{equation}
Similarly to Subsection~\ref{sect:method}, we consider a progress function
\begin{align*}
 W(\Phi_t)&=\scprod{\Gamma\circ \proj{v}}{\Phi_t}\\
&=\tr_\I\left[\Gamma(\Phi_t\circ \proj{v})\right]\\
&=\bra{\Phi_t}\Gamma\otimes\bbone_\A\ket{\Phi_t}\\
&\equiv\avg{\Gamma}_t
\end{align*}
where we use the usual notation $\avg{\Gamma}_t$ for the expectation value of observable $\Gamma$ when measuring state $\ket{\Phi_t}$.
From Ehrenfest's theorem~\cite{Ehr27}, this expectation value evolves as
\begin{align*}
 \frac{d\avg{\Gamma}_t}{dt}=-i\avg{[\Gamma,H(t)]}_t+\avg{\frac{\partial\Gamma}{\partial t}}_t,
\end{align*}
where the second term is zero since $\Gamma$ is time-independent. Therefore, we have
\begin{align*}
 \frac{dW(\Phi_t)}{dt}&=-i\bra{\Phi_t}[\Gamma,H(t)]\ket{\Phi_t}\\
&=-i\sum_{x,y}v_xv_y^*\Gamma_{yx}\bra{\phi_y(t)}H_x(t)-H_y(t)\ket{\phi_x(t)}\\
&=-i\alpha(t) \sum_{x,y}v_xv_y^*\Gamma_{yx}\sum_{j:x_j\neq y_j}\bra{\phi_y(t)} \proj{j}\otimes[h(x_j)-h(y_j)] \ket{\phi_x(t)}\\
&=-i\alpha(t)  \sum_{ j}\sum_{x,y} (1-\delta_{x_jy_j}) v_xv_y^*\Gamma_{yx}[\Phi_t^j]_{yx}\\
&=-i\alpha(t) \sum_j \scprod{\Gamma\circ\Delta_j}{\Phi_t^j\circ \proj{v}},
\end{align*}
where we have defined the matrices $[\Phi_t^j]_{yx}= \braket{ \phi_y(t)}{ \proj{j}\otimes[h(x_j)-h(y_j) ]\vert\phi_x(t)}$.
Using the properties of the inner product and the fact that $\abs{\alpha(t)}\leq 1$, we may  bound the variation of the progress function as
\begin{align*}
 \left|\frac{dW(\Phi_t)}{dt}\right|&\leq\left|\sum_j \scprod{\Gamma\circ\Delta_j}{\Phi_t^j\circ \ketbra{v}{v}}\right|\\
&\leq\sum_j \|\Gamma \circ\Delta_j \| . \|\Phi_t^j\circ \ketbra{v}{v}\|_\tr,\\
&\leq\sum_j \|\Gamma \circ\Delta_j \| . \gamma_2(\Phi_t^j),\\
&\leq\max_j\|\Gamma \circ\Delta_j \|\cdot\Big[ \sum_j \gamma_2(\Phi_t^j)\Big].
\end{align*}
We now show that $\sum_j \gamma_2(\Phi_t^j)\leq 2$. First, as $\{\proj{j}\}_{j\in[n]}$ is a set of orthogonal projectors defined from the orthogonal basis $\{\ket{j}\}_{j\in[n]}$, we have $\sum_j \gamma_2(\Phi_t^j)= \gamma_2(\sum_j\Phi_t^j)$.

Using the minimization form in Definition~\ref{def:gamma2}, we show that there exist $\{\ket{u_x},\ket{v_x}\}_x$ such that $\sum_j \big[\Phi_t^j\big]_{yx}=\braket{u_y}{v_x}$ and $\max_x \big\{\max \{\|\ket{v_x}\|^2,\|\ket{u_x}\|^2\}\big\}\leq 2$. Indeed, let
\begin{align*}
 \ket{u_x}=-H_\mathcal{Q}(x)\ket{\phi_x(t)}\ket{0}+\ket{\phi_x(t)}\ket{1}, \qquad 
\ket{v_x}=\ket{\phi_x(t)}\ket{0}+H_\mathcal{Q}(x)\ket{\phi_x(t)}\ket{1}.
\end{align*}
Then, we have $\braket{u_y}{v_x}=\sum_j[\Phi_t^j]_{yx}$, and the upper-bound  on the norms of these vectors follows from the conditions $\|h(y)\|\leq 1$ for all $y$, which imply $\norm{H_\mathcal{Q}(x)}\leq 1$ for all $x$.

Since $\sum_j \gamma_2(\Phi_t^j)\leq 2$, the last bound then reduces to
\begin{align*}
 \left|\frac{dW(\Phi_t)}{dt}\right|&\leq2\max_j\|\Gamma \circ\Delta_j\|.
\end{align*}
Moreover, for a zero-error algorithm, we also have
\begin{align*}
\label{derivADV}
\big\vert\left\langle \Gamma\circ(\sigma-\rho),vv^*\right\rangle\big\vert
&=\big\vert W(\Phi_T)-W(\Phi_0)\big\vert\\
&= \left|\int_0^T \frac{dW(\Phi_t)}{dt}\right|\\
&\leq T \max_{t\in[0,T]}\left|\frac{dW(\Phi_t)}{dt}\right|\\
&\leq 2T \max_j\|\Gamma \circ\Delta_j\|.
\end{align*}
By optimizing over $\Gamma$ and $\ket{v}$, we obtain the zero-error adversary bound $T\geq\frac{1}{2}\advstar(\rho,\sigma)$, which proves the first part of the theorem. The second part then directly follows from Lemma~\ref{lem:bounded-error}.
\end{proof}

\section{Adiabatic quantum query algorithm}\label{sect:adiaconvert}

In this section, we build an adiabatic quantum query algorithm \textbf{AdiaConvert}($\rho,\sigma,\varepsilon$), for solving the  quantum state conversion problem $(\rho,\sigma)$, with an error $\eps$ and a running time $\tau=O(\advstar(\rho,\sigma)/\eps)$.
Together with Theorem \ref{thm:ct-adversary}, this implies that the adversary method characterizes the quantum query complexity in the time-continuous model for bounded error.
\begin{thm}\label{thm:adv}
 For any $|\X|$-by-$|\X|$ Gram matrices $\rho,\sigma$, we have
\begin{align*}
 Q_\eps^\ct(\rho,\sigma)&= O\Big(\frac{\advstar(\rho,\sigma)}{\eps}\Big).
\end{align*}
\end{thm}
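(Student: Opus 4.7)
The plan is to construct the algorithm \textbf{AdiaConvert}$(\rho,\sigma',\eps)$ by promoting an optimizer of the dual SDP for $\advstar(\rho,\sigma')$ into a continuous-time Hamiltonian whose instantaneous zero-eigenvector adiabatically implements the state conversion, then to use the slack in bounded-error complexity to allow the target Gram matrix $\sigma'$ to differ slightly from $\sigma$.

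First, I would invoke the filtered $\gamma_2$ formulation of the adversary bound (Definition \ref{def:filter}) to extract vectors $\{\ket{u_{x,j}},\ket{v_{y,j}}\}$ realizing $(\rho-\sigma')_{x,y} = \sum_j (\Delta_j)_{x,y}\braket{u_{x,j}}{v_{y,j}}$, with both $\max_x \sum_j \|\ket{u_{x,j}}\|^2$ and $\max_y \sum_j \|\ket{v_{y,j}}\|^2$ bounded by $W := \advstar(\rho,\sigma')$. These vectors furnish the building blocks of an input-independent driver Hamiltonian $H_D(s)$ acting on an enlarged Hilbert space containing a register for $j$ and a register for the oracle; combined with the query Hamiltonian of equation (\ref{hamiQ}), the total Hamiltonian takes the standard form $H(s,x) = H_D(s) + \alpha(s) H_Q(x)$ of equation (\ref{hami}), with $\alpha(s)$ a smooth schedule supported on $[0,1]$.

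Second, I would design $H_D(s)$ and $\alpha(s)$ so that $H(s,x)$ admits a continuous family of zero-energy eigenvectors $\ket{\phi_x(s)}$ interpolating between $\ket{\rho_x,\bar 0}$ at $s=0$ and $\ket{\sigma'_x,\bar 0}$ at $s=1$. The intermediate states are built from the dual vectors $\ket{u_{x,j}}$ and $\ket{v_{x,j}}$ paired with the oracle states $\ket{x_j^\pm}$ defined in equation (\ref{defPM}); since $h(y)=\proj{y^-}$ annihilates $\ket{y^+}$, arranging one block of the kernel vector to live on $\ket{x_j^+}$ ensures automatic cancellation against $H_Q(x)$, while the other block is placed in the kernel of $H_D(s)$. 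The adversary constraint $\|\Gamma\circ\Delta_j\|\leq 1$, in its dual vectorial form, is precisely what allows $\ket{\phi_x(s)}$ to be normalized with an $O(1)$ overhead independently of $x$.

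Third, and this is where the principal obstacle lies, I would apply the gap-free adiabatic theorem of Avron--Elgart (Lemma \ref{adiBound}) rather than the standard gapped version, since the spectrum of $H(s,x)$ will in general accumulate at zero and no uniform gap is available. This requires exhibiting a bounded solution $X(s,x)$ to the commutator equation $\dot P_x(s) P_x(s) = [H(s,x),X(s,x)]$ with $\|X(s,x)\|$ and $\|\dot X(s,x) P_x(s)\|$ both uniformly of order $W$ in $x$. I would seek $X(s,x)$ of the form of a partial inverse of $H(s,x)$ applied to $\dot P_x(s) P_x(s)$, and bound its norm by exploiting that the vectors $\ket{u_{x,j}},\ket{v_{x,j}}$ give an explicit orthogonal complement to the kernel with operator-norm control inherited directly from the SDP constraints. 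This is the step where the factor $W = \advstar(\rho,\sigma')$ enters the running time $\tau = O(W/\eps)$ delivered by Lemma \ref{adiBound}.

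Finally, I would assemble the bounded-error statement via Lemma \ref{lem:bounded-error}: it suffices to produce some $\sigma'$ with $\mathcal{F}_H(\sigma,\sigma')\geq\sqrt{1-\eps}$. By Corollary \ref{cor:relation} and Proposition \ref{lem:Dist}, choosing $\sigma'$ within query distance $O(\eps)$ of $\sigma$ yields the required Hadamard product fidelity, while the adiabatic error $\eps_{AP}$ is separately tuned through $\tau$ to absorb the contribution coming from the approximate nature of the adiabatic evolution, together giving total error $\eps$ with total runtime $O(\advstar(\rho,\sigma)/\eps)$. Verifying the boundedness of $X(s,x)$ and $\dot X(s,x) P_x(s)$ remains the only nonroutine calculation; the kernel family $\ket{\phi_x(s)}$ and the bounded-error reduction are comparatively mechanical once the dual SDP data is in hand.
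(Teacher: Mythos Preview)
Your high-level architecture matches the paper's: extract dual vectors $\{\ket{u_{x,j}},\ket{v_{x,j}}\}$ witnessing $\gamma_2(\rho-\sigma\mid\Delta)$, build a zero-eigenvector family for a Hamiltonian of the form $\Lambda(s)-\Pi_x$, and invoke the Avron--Elgart lemma. But one concrete mechanism is missing, and without it the quantitative claim $\norm{X}=O(W)$ with exact endpoint matching fails.

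The paper does \emph{not} construct a kernel vector that equals $\ket{\rho_x}$ and $\ket{\sigma_x}$ exactly at $s=0,1$. Instead it introduces an \emph{asymmetric} $\eps$-scaling: the tracked eigenvector is
\[
\ket{\Psi_x^+(s,\eps)}=\ket{k_x^+(s)}+\tfrac{\eps}{\sqrt{W}}\textstyle\sum_i\ket{i,x_i^+}\ket{u_{x,i}},
\]
while the orthogonal witness entering the driver is
\[
\ket{\Psi_x^-(s,\eps)}=\ket{k_x^-(s)}+\xi(s)\tfrac{\sqrt{W}}{\eps}\textstyle\sum_i\ket{i,x_i^-}\ket{v_{x,i}}.
\]
The product of the two coefficients is forced to equal $1$ by the orthogonality $\braket{\Psi_x^+}{\Psi_y^-}=0$ (this is exactly where the filtered-$\gamma_2$ constraint is consumed), so the $u$-part of $\ket{\Psi_x^+}$ \emph{cannot} be made to vanish at the endpoints---any $s$-dependent attenuation there forces a compensating blow-up in $\ket{\Psi_x^-}$ or in its derivative. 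The $\eps$-scaling is thus a deliberate trade-off: it makes $\ket{\psi_x^+}$ only $\eps$-distant from $\ket{k_x^+}$ (endpoint error $\eps$), while the explicit rank-one solution $X_x=\tfrac{\pi}{2N_x}\ket{\Psi_x^-}\bra{\psi_x^+}$ of the commutator equation satisfies $\norm{X_x},\,\norm{\dot X_x P_x}=O(W/\eps)$, not $O(W)$. Lemma~\ref{adiBound} then gives $\tau=O(W/\eps^2)$ for adiabatic error $\eps$, the total state distance is $3\eps$, hence output-condition error $9\eps^2$; the advertised $O(W/\eps)$ bound appears only after the relabeling $\eps'\leftarrow 9\eps^2$.

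Two smaller points. First, your use of an auxiliary target $\sigma'$ is not how the paper proceeds: \textbf{AdiaConvert} is run directly for $(\rho,\sigma)$, and Proposition~\ref{lem:Dist} with Corollary~\ref{cor:relation} are invoked only to dispose of the degenerate case $W<\eps/2$, where $\rho$ itself already satisfies the output condition. Second, describing $X$ as a ``partial inverse of $H$'' is misleading here---the eigenvalue is $0$ and there is no gap, so no resolvent is available; the paper instead writes down the rank-one $X_x$ above and verifies $[H_x,X_x]=\dot P_x P_x$ directly from Items~\textbf{5} and~\textbf{6} of Proposition~\ref{propAll}.
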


\paragraph*{Description of AdiaConvert}
The algorithm acts on a Hilbert space $\mathcal{H}=\mathcal{H}_\mathcal{O}\oplus\mathcal{H}_\mathcal{Q}\otimes\mathcal{H}_\mathcal{W}$ where $\mathcal{H}_\mathcal{O}$ is the output register, $\mathcal{H}_\mathcal{Q}$ the query register and $\mathcal{H}_\mathcal{W}$ a workspace register. Without loss of generality, we can make the initial and target states orthogonal by adding an ancilla qubit in state $\ket{0}$ for $\ket{\rho_x}$ and $\ket{1}$ for $\ket{\sigma_x}$. We then define a continuous path from $\ket{\rho_x}\ket{0}$ to $\ket{\sigma_x}\ket{1}$:
\[
\begin{split}
\ket{k_x^+(s)}_\mathcal{O}= &\quad \cos \theta(s)\ket{ 0,\rho_x}_\mathcal{O}  + \sin \theta(s)\ket{ 1,\sigma_x}_\mathcal{O}  , \\
\ket{k_x^-(s)}_\mathcal{O}= &      -\sin \theta(s)\ket{ 0,\rho_x}_\mathcal{O} + \cos \theta(s)\ket{ 1,\sigma_x}_\mathcal{O}   ,
\end{split} 
\]
where $\theta(s)= \frac{\pi}{2}s$ and  $s\in [0,1].$

From Definition  \ref{def:adv}, let $\big\{\ket{u_{x,i}},\ket{v_{x,i}}\big\}_{x,i}$ be vectors witnessing $\gamma_2(\rho-\sigma\vert \Delta)=W$, with $W\overset{\mathrm{def}}{=}\advstar(\rho,\sigma)$.
We use those states to define the following non-normalized states:
\[
\begin{split}
\ket{\Psi_x^+(s,\varepsilon)} = & \ket{k_x^+(s)}_\mathcal{O} 
+\frac{\eps}{\sqrt{W}} \sum_i \ket{i, x^+_i}_\mathcal{Q}\ket{u_{x,i}}_\mathcal{W},\\
\ket{\Psi_x^-(s,\varepsilon)} = &\ket{k_x^-(s)}_\mathcal{O} 
+\xi(s)\frac{\sqrt{W}}{\varepsilon}\sum_i \ket{i, x^-_i}_\mathcal{Q}\ket{v_{x,i}}_\mathcal{W},
\end{split} 
\]
where $\ket{x_i^\pm}$ is defined by~\eqref{defPM}, and $\xi(s)=2 \cos\theta(s)\sin\theta(s)$.
Note that we have $\braket{x^-_i}{y^+_i}=\frac{1}{2}\big[1-\delta_{x_i,y_i}\big]$. We also let $\ket{\psi_x^\pm(s,\varepsilon)}$ be their normalized versions.

The algorithm uses as driver Hamiltonian the projection $\Lambda(s,\varepsilon)$ on the vector space $V(s,\varepsilon)=\mathrm{span}\{\ket{\Psi_x^-(s,\varepsilon)}\vert x\in \X\}$, and as oracle Hamiltonian, $\Pi_x=\sum_i\ketbra{i,x^-_i}{i,x^-_i}_\mathcal{Q}\otimes\bbone_\mathcal{W}$ (note that $\|\Pi_x\|\leq 1$).

\vspace{5mm}

\fbox{\begin{minipage}{0.75\textwidth}

\textbf{AdiaConvert}($\rho,\sigma,\varepsilon$) 
\begin{description}
\item[1] Prepare the state $\ket{ 0,\rho_x}$.
\item[2] If $\advstar(\rho,\sigma)< \varepsilon/2$, do nothing.
\item[3] Otherwise  apply the Hamiltonian $H_x(s,\varepsilon)=\Lambda(s,\varepsilon)-\Pi_x$,\\ where $s=t/\tau$  and  $\tau=15 \frac{\advstar(\rho,\sigma)}{\varepsilon^2}$, from $t=0$ to $t=\tau$.
\end{description}
\end{minipage}
}\\ \vspace{0.5cm}

The action of the algorithm is simple, first, if  $\advstar(\rho,\sigma)<\varepsilon/2$, then we claim, using Proposition \ref{lem:Dist} and  Corollary \ref{cor:relation}, that  $\rho$ and $\sigma$ are closed enough, and satisfies the coherent output condition given in Definition \ref{def:output}.

Otherwise, in order to convert the initial state $\ket{ 0,\rho_x}$ into a state close enough to the target state $\ket{ 1,\sigma_x}$, we consider the state $\ket{\psi_x^+(s,\eps)}$, which is $\eps$-distant to the state $\ket{k_x^+(s)}$ interpolating between the initial and target state. We then use the adiabatic process $\{H_x(s,\eps),P_x(s,\eps),\tau\}$ with failure $\eps$, where $P_x(s,\eps)$ is the rank-$1$ orthogonal projection on the state $\ket{\psi_x^+(s,\eps)}$. The correctness of the adiabatic evolution is  based on Lemma \ref{adiBound}, where the solution of Equation \eqref{eq:comm} follows from Item 5 in Proposition \ref{propAll}. Therefore the final state is $3\eps$-distant from the target state since the algorithm incurs error $\eps$ at the initial state, during the adiabatic process, and at the target state. This implies that we solve the quantum state generation problem with error at most $9\eps^2$, and in turn that $Q^\ct_{9\eps^2}(\rho,\sigma)\leq 15{\advstar(\rho,\sigma)}/{\varepsilon^2}$.

The proof of Theorem \ref{thm:adv} is the consequence of the existence of the optimal quantum query algorithm \textbf{AdiaConvert}. As the number of query involved are given by the time scale $\tau$, the demonstration relies on the derivation of an adiabatic bound linear in $\advstar$.\\

In order to prove Theorem \ref{thm:adv}, we first derive several useful properties of the algorithm \textbf{AdiaConvert}.
\begin{prop}\label{propAll}
For any $s,\eps\in[0,1]$ and for all $x\in \X $
\begin{description}
\item[1] $N_x(\eps)\overset{\mathrm{def}}{=}\|\ket{\Psi_x^+(s,\eps)}\|\leq 1+\eps^2/2$,
\item[2] $\ket{k_x^+(s)}$ and $ \ket{\psi_x^+(s,\eps)}$ are $\eps$-distant,
\item[3] $\Lambda(s,\eps)\ket{\psi_x^+(s,\eps)}=0, $
\item[4] $\ket{\psi_x^+(s,\eps)}$ is an eigenvector of $H_x(s,\eps)$ with eigenvalue $\lambda_x(s,\eps)=0,$ 
\item[5] $\bra{\psi_x^+(s,\eps)}\Big(\partial_s \ket{\psi_x^+(s,\eps)}\Big)=0,$
\item[6] $\partial_s\ket{\Psi_x^+(s,\varepsilon)}= \frac{\pi}{2} H_x(s,\varepsilon)\ket{\Psi_x^-(s,\varepsilon)},$
\item[7] $\|\ket{\Psi_x^-(s,\eps)}\|^2\leq 1+W^2/\eps^2.$
\end{description}
\end{prop}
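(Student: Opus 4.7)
The plan is to verify the seven items essentially in order, relying on two central facts: (i) $N_x(\eps)$ is independent of $s$ because $\ket{k_x^+(s)}$ has unit norm and the $\mathcal{Q}\otimes\mathcal{W}$ part of $\ket{\Psi_x^+(s,\eps)}$ is $s$-independent; and (ii) the vectors $\{\ket{u_{x,i}},\ket{v_{x,i}}\}$ witness $\gamma_2(\rho-\sigma\vert\Delta)=W$, which gives both the norm bounds $\max_x\sum_i\|\ket{u_{x,i}}\|^2\leq W$ and $\max_x\sum_i\|\ket{v_{x,i}}\|^2\leq W$ and the decomposition $(\rho-\sigma)_{x,y}=\sum_i (\Delta_i)_{x,y}\braket{u_{x,i}}{v_{y,i}}$.

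For items 1 and 7, I would just expand $\|\ket{\Psi_x^\pm(s,\eps)}\|^2$ using the orthogonality between $\mathcal{H}_\mathcal{O}$ and $\mathcal{H}_\mathcal{Q}\otimes\mathcal{H}_\mathcal{W}$ (the direct sum structure) and the orthogonality of the $\ket{i,x_i^\pm}$ for distinct $i$. Item 1 then follows from $\sqrt{1+\eps^2}\leq 1+\eps^2/2$, and item 7 follows from $\vert\xi(s)\vert\leq 1$. For item 2, after writing $\ket{k_x^+}-\ket{\psi_x^+}=(1-1/N_x)\ket{k_x^+}-\frac{1}{N_x}\cdot\frac{\eps}{\sqrt W}\sum_i\ket{i,x_i^+}\ket{u_{x,i}}$ and using that the two summands live in orthogonal subspaces, the squared distance reduces to $2(1-1/N_x)$, which is bounded by the second-moment $N_x^2-1\leq\eps^2$ via $(1+a)^{-1/2}\geq 1-a/2$.

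The main obstacle, and the key algebraic computation, is item 3: it amounts to showing $\braket{\Psi_y^-(s,\eps)}{\Psi_x^+(s,\eps)}=0$ for all $x,y$. The cross-terms between $\mathcal{H}_\mathcal{O}$ and $\mathcal{H}_\mathcal{Q}\otimes\mathcal{H}_\mathcal{W}$ vanish, leaving two contributions: first, $\braket{k_y^-}{k_x^+}=\cos\theta\sin\theta\,(\sigma_{y,x}-\rho_{y,x})=-\tfrac{\xi(s)}{2}(\rho-\sigma)_{y,x}$; second, the $\mathcal{Q}\otimes\mathcal{W}$ inner product equals $\xi(s)\sum_i\braket{y_i^-}{x_i^+}\braket{v_{y,i}}{u_{x,i}}=\tfrac{\xi(s)}{2}\sum_i(\Delta_i)_{x,y}\overline{\braket{u_{x,i}}{v_{y,i}}}=\tfrac{\xi(s)}{2}(\rho-\sigma)_{y,x}$, using the identity $\braket{y_i^-}{x_i^+}=\tfrac{1}{2}(1-\delta_{x_i,y_i})$ and taking the complex conjugate of the witness identity for $(\rho-\sigma)_{x,y}$. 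The two terms cancel. Then item 4 follows immediately: $\Lambda(s,\eps)\ket{\psi_x^+}=0$ from item 3, and $\Pi_x\ket{\Psi_x^+}=0$ since $\ket{k_x^+}$ lies in $\mathcal{H}_\mathcal{O}$ (where $\Pi_x$ vanishes) and $\braket{x_i^-}{x_i^+}=0$.

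For item 6, I would compute both sides. The left-hand side is $\partial_s\ket{k_x^+(s)}_\mathcal{O}=\tfrac{\pi}{2}\ket{k_x^-(s)}_\mathcal{O}$ since the $\mathcal{Q}\otimes\mathcal{W}$ part is $s$-independent. For the right-hand side, $\Lambda\ket{\Psi_x^-}=\ket{\Psi_x^-}$ by definition of $\Lambda$ as the projector onto $V(s,\eps)$, while $\Pi_x\ket{\Psi_x^-}=\xi(s)\tfrac{\sqrt W}{\eps}\sum_j\ket{j,x_j^-}\ket{v_{x,j}}$ because $\braket{x_j^-}{x_j^-}=1$; subtracting leaves exactly $\ket{k_x^-}_\mathcal{O}$, giving item 6. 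Finally item 5 is a one-liner: since $N_x$ is $s$-independent and $\partial_s\ket{\Psi_x^+}=\tfrac{\pi}{2}\ket{k_x^-}_\mathcal{O}$, the overlap reduces to $\tfrac{\pi}{2N_x^2}\braket{k_x^+}{k_x^-}=0$ as $\ket{k_x^+}$ and $\ket{k_x^-}$ are an orthogonal rotation pair.
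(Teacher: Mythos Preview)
Your proof is correct and follows essentially the same approach as the paper: same norm computations for items 1, 2, 7, same orthogonality argument for item 3 via the witness identity, same use of $\braket{x_i^-}{x_i^+}=0$ for item 4, and the same observation that $\partial_s\ket{\Psi_x^+}=\tfrac{\pi}{2}\ket{k_x^-}$ for items 5 and 6. The only cosmetic difference is in item 6, where the paper writes $(\bbone-\Pi_x)\ket{\Psi_x^-}=\ket{k_x^-}$ and then inserts $\Lambda-\Lambda$, whereas you go directly via $\Lambda\ket{\Psi_x^-}=\ket{\Psi_x^-}$; the content is identical.
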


Let us note that Item \textbf{5} is the key property that prevents the instantaneous state $\ket{\psi_x^+(s,\eps)}$ from leaking to degenerate subspaces of eigenvalue 0.

\begin{proof}
\textbf{1)} By Definition \ref{def:filter}, we have
$
 \sum_{i}\|\ket{u_{x,i}}\|^2\leq\gamma_2(\rho-\sigma|\Delta)=W
$,
so that
\begin{align*}
 N_x^2(\eps) =\Big\|\ket{\Psi_x^+(s,\varepsilon)} \Big\|^2 = 1 + \frac{\varepsilon^2}{ W}\sum_{i}\Big\|\ket{u_{x,i}}\Big\|^2\leq 1+ \varepsilon^2.
\end{align*}
Item \textbf{1} then follows from the inequality $\sqrt{1+\delta}\leq 1+\delta/2$, for $\delta\in[0,1]$.
\vspace{0.2cm}\\

\textbf{2)} The scalar product of these vectors gives 
\begin{align*}
 \braket{\psi_x^+(s,\varepsilon)}{k^+_x(s)}=\frac{1}{N_x(\eps)}\braket{\Psi_x^+(s,\varepsilon)}{k^+_x(s)}=\frac{1}{N_x(\eps)}\geq 1-\varepsilon^2 /2.
\end{align*}
Since this scalar product is real, we have
\begin{align*}
 \norm{\ket{k_x^+(s)}-\ket{\psi_x^+(s,\varepsilon)}}^2=2-2\braket{\psi_x^+(s,\varepsilon)}{k^+_x(s)}\leq\eps^2.
\end{align*} 
\vspace{0.2cm}

\textbf{3)} Remember $\Lambda(s,\eps)$ is the projection on subspace $V(s,\varepsilon)=\mathrm{span}\{\ket{\Psi_x^-(s,\varepsilon)}\vert x\in \X\}$. Therefore, it suffices to show that for all $x,y\in\X$,
$\braket{\Psi_x^+(s,\varepsilon)}{\Psi_y^-(s,\varepsilon)}=0.$
By definition of $\ket{\Psi_x^+(s,\varepsilon)}$ and $\ket{\Psi_x^-(s,\varepsilon)}$, we have
\begin{align*}
 \braket{\Psi_x^+(s,\varepsilon)}{\Psi_y^-(s,\varepsilon)} =-\cos \theta(s)\sin\theta(s)\big[\rho_{x,y}-\sigma_{x,y} - \sum_{j: x_j\neq y_j}\braket{u_{x,j}}{v_{y,j}} \big].
\end{align*}
The right hand side is then zero due to the properties of $\big\{\ket{u_{x,i}},\ket{v_{x,i}}\big\}_{x,i}$ in Definition~\ref{def:adv}.
\vspace{0.2cm}\\

\textbf{4)} From Item \textbf{3} we already know that $\Lambda(s,\eps)\ket{\psi_x^+(s,\eps)}=0$. Then by the definition of $H_x(s,\eps)$, we must calculate $\Pi_x\ket{\psi_x^+(s,\eps)}$,
\begin{align*}
 \Pi_y\ket{\psi_x^+(s,\varepsilon)}\propto \sum_i [1-\delta_{x_i,y_i}]\ket{i,x^+_i,u_{x,i}},
\end{align*}
which is exactly zero for $x=y$.  
\vspace{0.2cm}\\

\textbf{5)} The property follows from
\begin{align*}
 \partial_s\ket{\psi_x^+(s,\eps)}=\frac{1}{N_x(\eps)}\partial_s\ket{\Psi_x^+(s,\eps)}=\frac{\pi }{2N_x(\eps)}\ket{k_x^-(s)}
\end{align*}
and the fact that $ \braket{\psi_x^+(s,\eps)}{k_x^-(s)}\propto\braket{k_x^+(s)}{k_x^-(s)}=0$.
\vspace{0.2cm}\\

\textbf{6)} 
\[
\begin{split}
\partial_s\ket{\Psi_x^+(s,\varepsilon)}&=\frac{\pi}{2} \ket{k_x^-(s)}\\
&=\frac{\pi}{2}  \Big(\bbone-\Pi_x\Big)\ket{ \Psi_x^-(s,\varepsilon)}\\
&=\frac{\pi}{2} 
\Big[ \Big(\Lambda(s,\varepsilon)-\Pi_x\Big)+\Big( \bbone-\Lambda(s,\varepsilon)\Big) \Big]\ket{  \Psi_x^-(s,\varepsilon)}\\
&=\frac{\pi}{2} H_x(s,\eps) \ket{\Psi_x^-(s,\varepsilon)}.
\end{split}
\]
In the second line,   $\Pi_x$ acts as the identity on $\ket{i,x_i^-}$. In the third line, the second term is zero by definition of $\Lambda(s,\varepsilon)$.
\vspace{0.2cm}\\

\textbf{7)} 
Similarly to the proof of Item \textbf{1} all vectors $\ket{v_{x,i}}$ have their norm bounded by $W$
\begin{align*}
 \Big\|\ket{\Psi_x^-(s,\varepsilon)} \Big\|^2 = 1 + \xi^2(s)\frac{W}{ \varepsilon^2 }\sum_{i}\Big\|\ket{v_{x,i}}\Big\|^2\leq 1+ \frac{W^2}{ \varepsilon^2 }.
\end{align*}
Noting that $\xi(s)=\sin(2\theta(s))$.

\end{proof}


\begin{proof}[Proof of Theorem~\ref{thm:adv}]  ~\\

Let $W=\advstar(\rho,\sigma)$.
We show that \textbf{AdiaConvert} solves the quantum state conversion in time $\tau=15 \frac{W}{\varepsilon^2}$ with error at most $9\eps^2$. Let us first consider the case where $W<\eps/2$. Then, Proposition~\ref{lem:Dist} implies $\mathcal{D}_H(\rho,\sigma)< \varepsilon/2$, and Corollary~\ref{cor:relation} concludes that $\mathcal{F}_H(\rho,\sigma)> 1-\eps/2 > \sqrt{1-\eps}$, so that the coherent output condition is already satisfied by the initial Gram matrix.

We now assume that $W\geq \eps/2$. Before we go any further, we must justify that the triplet $\{H_x(s,\eps),$ $P_x(s,\eps),\tau\}$ is an adiabatic process as defined in Definition \ref{def:adia}. First by definition, the state $\ket{\psi_x^\pm(s,\eps)}$ is $s$-smooth on $[0,1]$. It follows that $H_x(s,\eps)$  and $P_x(s,\eps)$ are also $s$-smooth. Moreover, by Item \textbf{4} of Proposition \ref{propAll}, $\ket{\psi_x^+(s,\eps)}$ is an eigenstate of $H_x(s,\eps)$ with a constant eigenvalue $\lambda_x(s,\eps)=0$.

In order to bound the error of the adiabatic process $\eps_{AP}$ with  Lemma \ref{adiBound}, we define an operator $X_x(s,\eps)$, solution of Equation \eqref{eq:comm}, where $X_x(s,\eps)$ and $\dot{X}_x(s,\eps)P_x(s,\eps)$ are bounded. 

Let $X_x(s,\eps)=\frac{\pi}{2N_x(\eps)}\ket{\Psi^-_x(s,\eps)}\!\!\bra{\psi^+_x(s,\eps)}$, Items \textbf{5} and \textbf{6} of Proposition \ref{propAll} imply 
\[[H_x(s,\eps),X_x(s,\eps)]=H_x(s,\eps)X_x(s,\eps)=\dot{P}_x(s,\eps)P_x(s,\eps).\]
To obtain $\eps_{AP}$ we derive a bound for $X_x(s,\eps)$ and $\dot{X}_x(s,\eps)P_x(s,\eps)$.

First, we have
\begin{align*}
 \|X_x(s,\eps)\|^2=\Big[\frac{\pi}{2N_x(\eps)}\Big]^2\Big\|\ket{\Psi_x^-(s,\varepsilon)} \Big\|^2.
\end{align*}
From Item \textbf{7} of Proposition \ref{propAll} and the fact that $W\geq \eps/2$, we obtain 
\begin{align*}
\|\ket{\Psi_x^-(s,\eps)}\|^2\leq 1+ \frac{W^2}{\eps^2}\leq 5 \frac{W^2}{\eps^2},
\end{align*}
knowing that $N_x(\eps)\geq 1$ we obtain the bound :  $\|X_x(s,\eps)\|\leq \frac{\pi \sqrt{5} }{2 }\frac{W}{\eps} $.\\

Second,   to bound $\|\dot{X}_x(s,\eps)P_x(s,\eps)\|$ we  derive $X_x(s,\eps)$
\[\dot{X}_x(s,\eps)=\frac{\pi}{2N_x(\eps)}\partial_s\big(\ket{\Psi^-_x(s,\eps)}\!\big)\!\bra{\psi^+_x(s,\eps)}+\frac{\pi^2}{4N_x(\eps)}\ket{\Psi^-_x(s,\eps)}\!\!\bra{k^-_x(s)}.\]
After adding $P_x(s,\eps)$ on the right, the second term disappears following Item \textbf{5} of Proposition \ref{propAll}, and we have
\[\begin{split}
\|\dot{X}_x(s,\eps)P_x(s,\eps)\|^2&=\Big[\frac{\pi}{2N_x(\eps)}\Big]^2\Big\|\partial_s\ket{\Psi^-_x(s,\eps)}\Big\|^2\\
&\leq \Big[\frac{\pi}{2 }\Big]^2 \Big(   \frac{\pi^2}{4} + \pi^2 \cos^2(\pi s) \frac{ W}{\varepsilon^2}\sum_{i}\|\ket{v_{x,i}}\|^2 \Big)\\
&\leq \Big[\frac{\pi}{2 }\Big]^2\pi^2\Big( \frac{1}{4} +\frac{W^2}{\varepsilon^2}\Big)\\
&\leq  \Big[\frac{\pi}{2 }\Big]^2 2\pi^2\frac{ W^2}{\eps^2}.
\end{split}\]

Thereby we have all the required conditions to use Lemma \ref{adiBound} for the adiabatic process  $\{H_x(s,\eps),P_x(s,\eps),\tau\}$, which ensures that $\eps_{AP}\leq\eps$ if
\[\tau\geq  \frac{15W}{\eps^2}\geq \frac{1}{\eps}\Big[\frac{W}{\eps}\Big( \pi\sqrt{5}+\frac{\pi^2}{\sqrt{2} }\Big) \Big].\]
Let $\ket{\psi_x^f}$ be the output state. Since the initial state  $\ket{ 0,\rho_x}$ and the target state  $\ket{ 1,\sigma_x}$ are $\eps$-distant from $\ket{\psi_x^+(0,\eps)}$ and $\ket{\psi_x^+(1,\eps)}$ (Item \textbf{2} of Proposition \ref{propAll}) and the adiabatic process introduces an additional error of $\eps_{AB}\leq\eps$, the output state $\ket{\psi_x^f}$ and the target state $\ket{ 1,\sigma_x}$ are $3\eps$-distant, which implies that $\Re(\braket{\psi_x^f}{1,\sigma_x})\geq \sqrt{1-9\eps^2}$. Therefore, we obtain
\begin{align*}
Q^\ct_{9\eps^2}(\rho,\sigma)\leq 15\frac{W}{\varepsilon^2},
\end{align*}
which implies the theorem by setting $\eps'=9\eps^2$.
\end{proof}


\subparagraph*{Acknowledgements}

This work was supported by the Belgian ARC project COPHYMA and the European Union Seventh Framework Programme (FP7/2007-2013) under grant agreement n. 600700 (QALGO).

\bibliographystyle{alphaurl}
\bibliography{adiabatic-quantum-query}

\newcommand{\etalchar}[1]{$^{#1}$}
\begin{thebibliography}{CGM{\etalchar{+}}09}

\bibitem[AE99]{Avron1998}
Joseph~E. Avron and Alexander Elgart.
\newblock {Adiabatic Theorem without a Gap Condition}.
\newblock {\em Communications in Mathematical Physics}, 203(2):445--463, June
  1999.
\newblock \href {http://arxiv.org/abs/math-ph/9805022}
  {\path{arXiv:math-ph/9805022}}, \href
  {http://dx.doi.org/10.1007/s002200050620} {\path{doi:10.1007/s002200050620}}.

\bibitem[Amb02]{Amb02}
Andris Ambainis.
\newblock Quantum lower bounds by quantum arguments.
\newblock {\em Journal of Computer and System Sciences}, 64(4):750--767, 2002.
\newblock \href {http://arxiv.org/abs/quant-ph/0002066}
  {\path{arXiv:quant-ph/0002066}}, \href
  {http://dx.doi.org/10.1006/jcss.2002.1826}
  {\path{doi:10.1006/jcss.2002.1826}}.

\bibitem[ASY87]{Avron1987}
J.E. Avron, R.~Seiler, and L.G. Yaffe.
\newblock {Adiabatic theorems and applications to the quantum Hall effect}.
\newblock {\em Communications in Mathematical Physics}, 110(1):33--49, March
  1987.
\newblock \href {http://dx.doi.org/10.1007/BF01209015}
  {\path{doi:10.1007/BF01209015}}.

\bibitem[AvK{\etalchar{+}}04]{AVK+04}
Dorit Aharonov, Wim {van Dam}, Julia Kempe, Zeph Landau, and Seth Lloyd.
\newblock {Adiabatic quantum computation is equivalent to standard quantum
  computation}.
\newblock In {\em {Proceedings of the 45th Annual Symposium on the Foundations
  of Computer Science}}, pages 42--51, New York, 2004. IEEE Computer Society
  Press.
\newblock \href {http://arxiv.org/abs/quant-ph/0405098}
  {\path{arXiv:quant-ph/0405098}}, \href
  {http://dx.doi.org/10.1137/S0097539705447323}
  {\path{doi:10.1137/S0097539705447323}}.

\bibitem[BBC{\etalchar{+}}01]{BBC+01}
Robert Beals, Harry Buhrman, Richard Cleve, Michele Mosca, and Ronald de~Wolf.
\newblock Quantum lower bounds by polynomials.
\newblock {\em Journal of the {ACM}}, 48:778--797, 2001.
\newblock \href {http://arxiv.org/abs/quant-ph/9802049}
  {\path{arXiv:quant-ph/9802049}}, \href
  {http://dx.doi.org/10.1145/502090.502097} {\path{doi:10.1145/502090.502097}}.

\bibitem[BdW02]{BdW02}
Harry Buhrman and Ronald de~Wolf.
\newblock Complexity measures and decision tree complexity: A survey.
\newblock {\em Theoretical Computer Science}, 288(1):21--43, 2002.
\newblock \href {http://dx.doi.org/10.1016/S0304-3975(01)00144-X}
  {\path{doi:10.1016/S0304-3975(01)00144-X}}.

\bibitem[BF28]{BF28}
M.~Born and V.~Fock.
\newblock Beweis des adiabatensatzes.
\newblock {\em Zeitschrift für Physik}, 51(3-4):165--180, 1928.
\newblock \href {http://dx.doi.org/10.1007/BF01343193}
  {\path{doi:10.1007/BF01343193}}.

\bibitem[CG04a]{Childs}
Andrew Childs and Jeffrey Goldstone.
\newblock {Spatial search by quantum walk}.
\newblock {\em Physical Review A}, 70(2):022314, August 2004.
\newblock \href {http://arxiv.org/abs/0306054v2} {\path{arXiv:0306054v2}},
  \href {http://dx.doi.org/10.1103/PhysRevA.70.022314}
  {\path{doi:10.1103/PhysRevA.70.022314}}.

\bibitem[CG04b]{Childs2004b}
Andrew~M. Childs and Jeffrey Goldstone.
\newblock {Spatial search and the Dirac equation}.
\newblock {\em Physical Review A}, 70(4):042312, October 2004.
\newblock \href {http://arxiv.org/abs/0405120} {\path{arXiv:0405120}}, \href
  {http://dx.doi.org/10.1103/PhysRevA.70.042312}
  {\path{doi:10.1103/PhysRevA.70.042312}}.

\bibitem[CGM{\etalchar{+}}09]{CGM+09}
Richard Cleve, Daniel Gottesman, Michele Mosca, Rolando~D. Somma, and David
  Yonge-Mallo.
\newblock Efficient discrete-time simulations of continuous-time quantum query
  algorithms.
\newblock In {\em Proceedings of the 41st Annual ACM Symposium on Theory of
  Computing}, pages 409--416. ACM, 2009.
\newblock \href {http://arxiv.org/abs/quant-ph/0811.4428}
  {\path{arXiv:quant-ph/0811.4428}}, \href
  {http://dx.doi.org/10.1145/1536414.1536471}
  {\path{doi:10.1145/1536414.1536471}}.

\bibitem[Chi09]{Childs2009}
Andrew~M. Childs.
\newblock {On the Relationship Between Continuous- and Discrete-Time Quantum
  Walk}.
\newblock {\em Communications in Mathematical Physics}, 294(2):581--603,
  October 2009.
\newblock \href {http://arxiv.org/abs/0810.0312} {\path{arXiv:0810.0312}},
  \href {http://dx.doi.org/10.1007/s00220-009-0930-1}
  {\path{doi:10.1007/s00220-009-0930-1}}.

\bibitem[{Ehr}27]{Ehr27}
P.~{Ehrenfest}.
\newblock {Bemerkung {\"u}ber die angen{\"a}herte G{\"u}ltigkeit der
  klassischen Mechanik innerhalb der Quantenmechanik}.
\newblock {\em Zeitschrift fur Physik}, 45:455--457, 1927.
\newblock \href {http://dx.doi.org/10.1007/BF01329203}
  {\path{doi:10.1007/BF01329203}}.

\bibitem[FG96]{FG96}
Edward Farhi and Sam Gutmann.
\newblock An analog analogue of a digital quantum computation.
\newblock 1996.
\newblock \href {http://arxiv.org/abs/quant-ph/9612026}
  {\path{arXiv:quant-ph/9612026}}.

\bibitem[FG99]{Fuchs1998}
Christopher~A. Fuchs and Jeroen van~De Graaf.
\newblock {Cryptographic distinguishability measures for quantum-mechanical
  states}.
\newblock In {\em IEEE Transactions on Information Theory}, volume~45, pages
  1216--1227, 1999.
\newblock \href {http://arxiv.org/abs/quant-ph/9712042v2}
  {\path{arXiv:quant-ph/9712042v2}}, \href
  {http://dx.doi.org/10.1109/18.761271} {\path{doi:10.1109/18.761271}}.

\bibitem[FGG08]{FGG07}
Edward Farhi, Jeffrey Goldstone, and Sam Gutmann.
\newblock A quantum algorithm for the {H}amiltonian {NAND} tree.
\newblock {\em Theory of Computing}, 4:169--190, 2008.
\newblock \href {http://arxiv.org/abs/quant-ph/0702144}
  {\path{arXiv:quant-ph/0702144}}, \href
  {http://dx.doi.org/10.4086/toc.2008.v004a008}
  {\path{doi:10.4086/toc.2008.v004a008}}.

\bibitem[FGGS00]{FGGS00}
Edward Farhi, Jeffrey Goldstone, Sam Gutmann, and Michael Sipser.
\newblock Quantum computation by adiabatic evolution.
\newblock 2000.
\newblock \href {http://arxiv.org/abs/quant-ph/0001106}
  {\path{arXiv:quant-ph/0001106}}.

\bibitem[FGT14]{Foulger2014}
Iain Foulger, Sven Gnutzmann, and Gregor Tanner.
\newblock {Quantum Search on Graphene Lattices}.
\newblock {\em Physical Review Letters}, 112(7):070504, February 2014.
\newblock \href {http://arxiv.org/abs/1312.3852} {\path{arXiv:1312.3852}},
  \href {http://dx.doi.org/10.1103/PhysRevLett.112.070504}
  {\path{doi:10.1103/PhysRevLett.112.070504}}.

\bibitem[HL{\v{S}}07]{HLS07}
Peter H\o{}yer, Troy Lee, and Robert {\v{S}}palek.
\newblock Negative weights make adversaries stronger.
\newblock In {\em Proceedings of the 39th Annual ACM Symposium on Theory of
  Computing}, pages 526--535. ACM, 2007.
\newblock \href {http://arxiv.org/abs/quant-ph/0611054}
  {\path{arXiv:quant-ph/0611054}}, \href
  {http://dx.doi.org/10.1145/1250790.1250867}
  {\path{doi:10.1145/1250790.1250867}}.

\bibitem[JRS07]{JRS07}
S.~Jansen, M.-B. Ruskai, and R.~Seiler.
\newblock Bounds for the adiabatic approximation with applications to quantum
  computation.
\newblock {\em J. Math. Phys.}, 48(10):102111, 2007.
\newblock \href {http://arxiv.org/abs/quant-ph/0603175}
  {\path{arXiv:quant-ph/0603175}}, \href {http://dx.doi.org/10.1063/1.2798382}
  {\path{doi:10.1063/1.2798382}}.

\bibitem[Kat50]{Kato1950}
Tosio Kato.
\newblock {On the Adiabatic Theorem of Quantum Mechanics}.
\newblock {\em Journal of the Physical Society of Japan}, 5(6):435--439,
  November 1950.
\newblock \href {http://dx.doi.org/10.1143/JPSJ.5.435}
  {\path{doi:10.1143/JPSJ.5.435}}.

\bibitem[LMR{\etalchar{+}}11]{LMRSS11}
Troy Lee, Rajat Mittal, Ben~W. Reichardt, Robert {\v{S}}palek, and Mario
  Szegedy.
\newblock Quantum query complexity of state conversion.
\newblock In {\em Proceedings of the 52nd Annual IEEE Symposium on Foundations
  of Computer Science}, pages 344--353. {IEEE} Computer Society, 2011.
\newblock \href {http://arxiv.org/abs/quant-ph/1011.3020}
  {\path{arXiv:quant-ph/1011.3020}}, \href
  {http://dx.doi.org/10.1109/FOCS.2011.75} {\path{doi:10.1109/FOCS.2011.75}}.

\bibitem[LR13]{LR13}
Troy Lee and J{\'e}r{\'e}mie Roland.
\newblock A strong direct product theorem for quantum query complexity.
\newblock {\em Computational Complexity}, 22(2):429--462, 2013.
\newblock \href {http://arxiv.org/abs/quant-ph/1104.4468}
  {\path{arXiv:quant-ph/1104.4468}}, \href
  {http://dx.doi.org/10.1007/s00037-013-0066-8}
  {\path{doi:10.1007/s00037-013-0066-8}}.

\bibitem[Moc07]{Mochon2007}
Carlos Mochon.
\newblock {Hamiltonian oracles}.
\newblock {\em Physical Review A - Atomic, Molecular, and Optical Physics},
  75(4), 2007.
\newblock \href {http://arxiv.org/abs/0602032} {\path{arXiv:0602032}}, \href
  {http://dx.doi.org/10.1103/PhysRevA.75.042313}
  {\path{doi:10.1103/PhysRevA.75.042313}}.

\bibitem[RC02]{RC02}
J{\'e}r{\'e}mie Roland and Nicolas~J. Cerf.
\newblock Quantum search by local adiabatic evolution.
\newblock {\em Physical Review A}, 65:042308, 2002.
\newblock \href {http://arxiv.org/abs/quant-ph/0107015}
  {\path{arXiv:quant-ph/0107015}}, \href
  {http://dx.doi.org/10.1103/PhysRevA.65.042308}
  {\path{doi:10.1103/PhysRevA.65.042308}}.

\bibitem[Rei09]{Rei09}
Ben~W. Reichardt.
\newblock Span programs and quantum query complexity: The general adversary
  bound is nearly tight for every {B}oolean function.
\newblock In {\em Proceedings of the 50th {A}nnual {IEEE} {S}ymposium on
  {F}oundations of {C}omputer {S}cience}, pages 544--551. {IEEE} Computer
  Society, 2009.
\newblock \href {http://arxiv.org/abs/quant-ph/0904.2759}
  {\path{arXiv:quant-ph/0904.2759}}, \href
  {http://dx.doi.org/10.1109/FOCS.2009.55} {\path{doi:10.1109/FOCS.2009.55}}.

\bibitem[Rei11]{Rei11}
Ben~W. Reichardt.
\newblock {Reflections for quantum query algorithms}.
\newblock In {\em Proceedings of the 22nd {ACM-SIAM} Symposium on Discrete
  Algorithms}, pages 560--569, 2011.
\newblock \href {http://arxiv.org/abs/quant-ph/1005.1601}
  {\path{arXiv:quant-ph/1005.1601}}, \href
  {http://dx.doi.org/10.1137/1.9781611973082.44}
  {\path{doi:10.1137/1.9781611973082.44}}.

\bibitem[RS75]{reed1975methods}
M.~Reed and B.~Simon.
\newblock {\em Methods of modern mathematical physics. 2. Fourier analysis,
  self-adjointness}.
\newblock Fourier Analysis, Self-adjointness. Academic Press, 1975.
\newblock URL: \url{http://store.elsevier.com/product.jsp?isbn=9780125850025}.

\bibitem[R{\v{S}}12]{RS12}
Ben Reichardt and Robert {\v{S}}palek.
\newblock Span-program-based quantum algorithm for evaluating formulas.
\newblock {\em Theory of Computing}, 8(13):291--319, 2012.
\newblock \href {http://arxiv.org/abs/quant-ph/0710.2630}
  {\path{arXiv:quant-ph/0710.2630}}, \href
  {http://dx.doi.org/10.4086/toc.2012.v008a013}
  {\path{doi:10.4086/toc.2012.v008a013}}.

\bibitem[vDMV02]{VanDam2002}
W.~van Dam, M.~Mosca, and U.~Vazirani.
\newblock {How powerful is adiabatic quantum computation?}
\newblock {\em Proceedings 2001 IEEE International Conference on Cluster
  Computing}, pages 279--287, 2002.
\newblock \href {http://arxiv.org/abs/quant-ph/0206003}
  {\path{arXiv:quant-ph/0206003}}, \href
  {http://dx.doi.org/10.1109/SFCS.2001.959902}
  {\path{doi:10.1109/SFCS.2001.959902}}.

\bibitem[YM11]{YM11}
David Yonge-Mallo.
\newblock Adversary lower bounds in the {H}amiltonian oracle model.
\newblock 2011.
\newblock \href {http://arxiv.org/abs/quant-ph/1108.2479}
  {\path{arXiv:quant-ph/1108.2479}}.

\end{thebibliography}

\appendix

\section{Appendix: Adiabatic theorem without a gap condition}\label{sect:bound}

In this section we give an adapted version of the proof of Lemma \ref{adiBound} in \cite{Avron1998}. We derive an upper bound on the error $\eps_{AP}$ caused by the adiabatic process without a gap condition.  We use the same notations as in Subsection \ref{sect:adia}. 
\begin{lem}\cite{Avron1998}
Let $\{H(s),P(s),\tau\}$ be an adiabatic process and $\eps>0$. Suppose that the commutator equation 
\begin{equation}\label{eq:commm}
\dot{P}(s)P(s)=[H(s),X(s)]
\end{equation}
accepts as solution operator $X(s)$ such that both $X(s)$ and $\dot{X}(s)$ are bounded. If
\begin{align*}
 \tau \geq \max_{s\in [0,1]}\frac{1}{\eps}\Big[2\|X(s)\|+\|\dot{X}(s)P(s)\|\Big] ,
\end{align*}
then $\eps_{AP}\leq\eps$.
\end{lem}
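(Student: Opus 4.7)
The plan is to adapt the Avron--Elgart argument. First I would introduce the idealized adiabatic evolution $U_A(s)$ as the unitary propagator satisfying $i\partial_s U_A = (\tau H(s) + i[\dot P(s), P(s)])U_A$ with $U_A(0)=\bbone$. Standard computation, combined with the fact that $P(s)$ is the spectral projection and hence commutes with $H(s)$, shows that $U_A$ intertwines: $U_A(s)P(0)U_A^*(s) = P(s)$. Setting $V(s) = U_\tau^*(s)U_A(s)$, unitarity of $U_\tau$ gives $\eps_{AP}(s) = \norm{[V(s)-\bbone]P(0)}$, and a direct differentiation yields $\partial_s V = U_\tau^*[\dot P, P]U_A$.

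Using the intertwining relation $U_A P(0) = P U_A$, the identity $[\dot P, P]P = \dot P P$ (itself a consequence of $P\dot P P = 0$), and the commutator hypothesis $\dot P P = [H,X]$, one obtains the key representation
\begin{equation*}
[V(s)-\bbone]P(0) = \int_0^s U_\tau^*(s')[H(s'),X(s')]U_A(s')\, ds'.
\end{equation*}
The next step is integration by parts: the Schr\"odinger equation $\partial_{s'}U_\tau^* = i\tau U_\tau^* H$ allows one to rewrite the factor of $H$ as a derivative of $U_\tau^*$, gaining the desired $1/\tau$. After rearrangement and multiplication on the right by $P(0)$ (which, via $U_A P(0) = P U_A$, brings a factor of $P(s')$ into each integrand), one arrives at
\begin{align*}
[V(s)-\bbone]P(0) = \frac{1}{i\tau}\Big\{& U_\tau^*(s)X(s)P(s)U_A(s) - X(0)P(0) \\
& - \int_0^s U_\tau^*(s')\dot X(s')P(s')U_A(s')\, ds' \\
& - \int_0^s U_\tau^*(s')X(s')\dot P(s')P(s')U_A(s')\, ds'\Big\}.
\end{align*}

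The crux is that the final integrand vanishes. Without loss of generality $X$ may be replaced by $XP$: the commutator equation is preserved, since $[H,XP] = [H,X]P = \dot P P$ by $[H,P]=0$, and $XP = X$ now holds automatically. Then $X\dot P P = (XP)\dot P P = X(P\dot P P) = 0$, thanks to $P\dot P P = 0$. Only the boundary terms and the $\dot X P$ integral survive. Taking norms and using $\norm{XP}\leq\norm{X}$ together with unitarity of $U_\tau$ and $U_A$, I conclude
\begin{equation*}
\eps_{AP}(s) \leq \frac{1}{\tau}\Big[2\max_{s'\in[0,1]}\norm{X(s')} + \max_{s'\in[0,1]}\norm{\dot X(s')P(s')}\Big],
\end{equation*}
which, together with the stated hypothesis on $\tau$, yields $\eps_{AP} \leq \eps$.

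The main obstacle is controlling the extraneous term $\int U_\tau^* X\dot P P\, U_A\, ds'$ generated by the integration by parts: naively it would contribute something bounded only by $\norm{X}\cdot\norm{\dot P}$, a quantity not controlled by the hypotheses. The WLOG reduction $X\mapsto XP$, which is preserved by the commutator equation because $[H,P]=0$, combined with the nilpotency $P\dot P P = 0$, precisely dispatches this term and is the one place where the absence of a spectral gap requires genuine care.
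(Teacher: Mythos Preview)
Your proof is correct and follows the same Avron--Elgart strategy as the paper's Appendix~A, with one streamlining: you generate $U_A$ by $\tau H(s)+i[\dot P,P]$ instead of the paper's $\tau\lambda(s)\bbone+i[\dot P,P]$, so the $XH$ half of the commutator is absorbed directly into $\partial_s U_A$ and you avoid the paper's auxiliary phase $\Phi(s)$ and rotated propagator $V_A$ (at the cost of genuinely needing $[H,P]=0$ for intertwining, which you correctly invoke). Your explicit reduction $X\to XP$ is exactly what the paper does implicitly when it asserts ``$X(s)=X(s)P(s)$'' to kill the residual integral; note that in both arguments this is really an assumption on the chosen solution (satisfied in the paper's application, where $X_x=X_xP_x$ by construction), since after the replacement the derivative term reads $\partial_s(XP)\,P=\dot XP+X\dot PP$ rather than $\dot XP$ alone.
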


\paragraph*{Proof of Lemma~\ref{adiBound}}

In order to bound the quantity $\eps_{AP}$, we would like to describe an idealized adiabatic evolution $U_A(s)$ that transports the projector $P(0)$ to $P(s)$, such that $U_A(s)P(0)=P(s)U_A(s)$. To achieve this, we use a technique given by \cite{Kato1950} (later improved in~\cite{Avron1987}), and define $H_A(s)$ as the \textit{adiabatic Hamiltonian}
\begin{equation}\label{hamAdia} 
H_A(s)=\lambda(s)\bbone + \frac{i}{\tau}[\dot{P}(s),P(s)],
\end{equation}
where $[\cdot,\cdot]$ is the commutator.
We define $U_A(s)$ as the solution of the Schr\"odinger equation for this Hamiltonian, that is,
\begin{equation}
i\partial_s U_A(s)=\tau H_A(s)U_A(s),
\end{equation}
with the initial condition $U_A(0)=\bbone$. The existence and uniqueness of $U_A(s)$ follows from the analytical properties in Definition \ref{def:adia}. Moreover we show that $U_A(s)$ has the desired property.

\begin{lem}\cite{Kato1950} (Intertwining property)\label{lem:intertwining}
\begin{equation}
U_A(s)P(0)=P(s)U_A(s).\label{twin}
\end{equation} 
\end{lem}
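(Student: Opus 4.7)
My plan is to prove the intertwining property by the classical Kato trick: conjugate $P(0)$ by $U_A(s)$, show the conjugate satisfies a first-order linear ODE, check that $P(s)$ satisfies the same ODE with the same initial value, and invoke uniqueness.

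Concretely, define $W(s) = U_A(s) P(0) U_A^*(s)$. Differentiating and using the Schr\"odinger equation $i \partial_s U_A(s) = \tau H_A(s) U_A(s)$ together with its adjoint (noting $H_A(s)$ is self-adjoint, since $i[\dot{P},P]$ is self-adjoint whenever $P$ is an orthogonal projection), I get
\[
\dot{W}(s) = -i\tau\bigl[H_A(s), W(s)\bigr].
\]
Substituting the explicit form $H_A(s) = \lambda(s)\bbone + \tfrac{i}{\tau}[\dot{P}(s),P(s)]$, the scalar term drops out of the commutator and leaves
\[
\dot{W}(s) = \bigl[[\dot{P}(s),P(s)],\, W(s)\bigr].
\]

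Next I would verify that $P(s)$ itself solves the same linear ODE, i.e. $\dot{P} = [[\dot{P},P],P]$. This is the key algebraic step. Differentiating $P^2 = P$ gives $\dot{P}P + P\dot{P} = \dot{P}$, and multiplying by $P$ on the left yields $P\dot{P}P = 0$. Expanding $[[\dot{P},P],P]$ and using both identities gives exactly $\dot{P}P + P\dot{P} = \dot{P}$, as required. Since $W(0) = P(0)$ and both $W$ and $P$ satisfy the same well-posed linear ODE (the coefficient $[\dot{P},P]$ is bounded and continuous by the regularity of $H(s)$ and $P(s)$ in Definition~\ref{def:adia}), uniqueness of the solution yields $W(s) = P(s)$, i.e.
\[
U_A(s)\, P(0)\, U_A^*(s) = P(s).
\]
Multiplying by $U_A(s)$ on the right gives the claimed identity $U_A(s)P(0) = P(s)U_A(s)$.

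The only genuinely delicate point is the algebraic identity $\dot{P} = [[\dot{P},P],P]$; once this is in hand the proof is a short uniqueness argument. Everything else is bookkeeping about $H_A$ being self-adjoint (so that $U_A(s)$ is unitary and $\dot{W}$ has the symmetric commutator form above) and invoking existence/uniqueness, which the analytic hypotheses in Definition~\ref{def:adia} guarantee.
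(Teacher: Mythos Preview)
Your proof is correct. Both your argument and the paper's rest on the same projector identities ($\dot{P}P + P\dot{P} = \dot{P}$ and $P\dot{P}P = 0$) together with an ODE uniqueness argument, but they package it slightly differently. The paper shows directly that $P(s)U_A(s)$ satisfies the \emph{same} Schr\"odinger equation $i\partial_s X(s) = \tau H_A(s) X(s)$ as $U_A(s)$ does, and concludes $P(s)U_A(s) = U_A(s)X(0) = U_A(s)P(0)$ by the structure of solutions. You instead pass to the Heisenberg picture, evolving $W(s)=U_A(s)P(0)U_A^*(s)$ and checking that $P(s)$ solves the resulting commutator ODE $\dot W=[[\dot P,P],W]$. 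The paper's route is marginally more economical in that it never needs $U_A^*$ and hence does not rely on $H_A(s)$ being self-adjoint; your route has the minor advantage of isolating the reusable algebraic identity $\dot{P} = [[\dot{P},P],P]$.
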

The proof of this property uses the following fact.
\begin{fact}\label{fone}For any orthogonal projector $P$ we have $P=P^2$, so that $\dot{P}=\dot{P}P+P\dot{P}$ and $P\dot{P}P=0$ .\end{fact}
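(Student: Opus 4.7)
The plan is to treat the three claims in order, each following by elementary manipulation. The identity $P=P^2$ is simply the defining property of an orthogonal projector (idempotence); there is nothing to prove beyond quoting the definition. For the second identity, I would differentiate both sides of $P(s)=P(s)^2$ with respect to $s$ and apply the Leibniz (product) rule for differentiating a product of two $s$-dependent operators, noting that the differentiability assumed in Definition~\ref{def:adia} together with the twice-differentiability of $H(s)$ ensures that $\dot{P}(s)$ exists in the appropriate sense. This yields $\dot{P}=\dot{P}P+P\dot{P}$ directly.

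For the third identity $P\dot{P}P=0$, I would start from the relation $\dot{P}=\dot{P}P+P\dot{P}$ just obtained and multiply it on the left by $P$, giving
\[
P\dot{P}=P\dot{P}P+P^2\dot{P}=P\dot{P}P+P\dot{P},
\]
where in the last step I used $P^2=P$. Cancelling $P\dot{P}$ from both sides leaves $P\dot{P}P=0$, as desired. An equivalent derivation would be to multiply the same relation on the right by $P$ and use $P^2=P$ again; either direction takes one line.

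There is no genuine obstacle here — the whole statement is a two-line exercise in operator algebra, and the only subtlety is making sure the product rule for $\partial_s$ is justified, which is immediate from the smoothness hypotheses placed on $H(s)$ and $P(s)$ in Definition~\ref{def:adia}.
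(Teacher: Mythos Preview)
Your proposal is correct and is exactly the standard two-line argument. The paper itself does not give a proof of this fact at all --- it is stated without proof and used immediately --- so your derivation simply fills in the elementary computation the paper takes for granted.
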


\begin{proof}[Proof of Lemma~\ref{lem:intertwining}]
Since $U_A(s)$ is the solution of the differential equation $i\partial_sX(s)=\tau H_A(s)X(s)$ with $X(0)=\bbone$, then every other solution of this equation has the form $X(s)=U_A (s) X(0)$. All we need to do is prove that $P(s)U_A(s)$ is also a solution. Indeed, this implies that $P(s)U_A(s)=U_A (s) X(0)$, and by setting $s=0$ we obtain $P(0)=X(0)$. Using Fact~\ref{fone}, we have
\[\begin{split}
i\partial_s\big(P(s)U_A(s)\big)&=i\dot{P}(s)U_A(s)+P(s)\tau H_A(s)U_A(s)\\
&=i\dot{P}(s)U_A(s)+\tau \lambda(s)P(s) U_A(s)+iP(s)[\dot{P}(s),P(s)]U_A(s)\\
&=\tau \lambda(s)P(s) U_A(s)+i\big(\dot{P}(s)-P(s)\dot{P}(s)\big)U_A(s)\\
&=\tau\lambda(s) P(s) U_A(s)+i\dot{P}(s) P(s)U_A(s)\\
&=\big(\tau\lambda(s) \bbone +i[\dot{P}(s),P(s)]\big)P(s)U_A(s)\\
&=\tau H_A(s)P(s)U_A(s)\\
\end{split}
\]
\end{proof}

In order to prove Lemma~\ref{adiBound}, we need two more claims. \\Note that $\eps_{AP}(s)$ can be rewritten as $\|\big(\Omega(s)-\bbone\big) P(0)\|$, where  $\Omega(s)=U^*_\tau(s)U_A(s)$. 
\begin{claim}\label{ftwo}
$\dot{\Omega}(s)P(0)=U_\tau^*(s)\dot{P}(s)U_A(s)P(0)$
\end{claim}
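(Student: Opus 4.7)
The plan is a straightforward computation: differentiate $\Omega(s)=U_\tau^*(s)U_A(s)$ using the Schr\"odinger equations for $U_\tau$ and $U_A$, then exploit the intertwining property together with the fact that $P(s)$ is a spectral projector of $H(s)$.

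First I would compute $\dot{\Omega}(s)$ directly. From $i\partial_s U_\tau(s)=\tau H(s)U_\tau(s)$ we get $\dot{U}_\tau^*(s)=i\tau U_\tau^*(s)H(s)$ (using that $H(s)$ is self-adjoint), and similarly $\dot{U}_A(s)=-i\tau H_A(s)U_A(s)$. Combining these by the product rule yields
\begin{equation*}
\dot{\Omega}(s)=i\tau\,U_\tau^*(s)\bigl[H(s)-H_A(s)\bigr]U_A(s).
\end{equation*}
Substituting the definition $H_A(s)=\lambda(s)\bbone+\frac{i}{\tau}[\dot P(s),P(s)]$ from~\eqref{hamAdia} gives
\begin{equation*}
\dot{\Omega}(s)=i\tau\,U_\tau^*(s)\bigl[H(s)-\lambda(s)\bbone\bigr]U_A(s)+U_\tau^*(s)\bigl[\dot P(s),P(s)\bigr]U_A(s).
\end{equation*}

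Next I would apply this identity to $P(0)$ on the right. By the intertwining property (Lemma~\ref{lem:intertwining}), $U_A(s)P(0)=P(s)U_A(s)$. Since $P(s)$ is the rank-one projector onto an eigenvector of $H(s)$ with eigenvalue $\lambda(s)$, we have $[H(s)-\lambda(s)\bbone]P(s)=0$, which kills the first term. For the second term, $[\dot P(s),P(s)]P(s)=\dot P(s)P(s)-P(s)\dot P(s)P(s)=\dot P(s)P(s)$, where the cancellation uses $P\dot P P=0$ from Fact~\ref{fone}.

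Putting these together,
\begin{equation*}
\dot{\Omega}(s)P(0)=U_\tau^*(s)\,\dot P(s)P(s)\,U_A(s)=U_\tau^*(s)\,\dot P(s)\,U_A(s)P(0),
\end{equation*}
where the last equality uses the intertwining property once more in the reverse direction. I do not expect any real obstacle here; the only thing to watch is sign bookkeeping when differentiating $U_\tau^*$ versus $U_A$, and remembering to invoke the intertwining identity at both ends of the computation so that the $\lambda(s)\bbone$ and $P\dot P P$ cancellations become available.
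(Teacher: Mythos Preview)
Your proposal is correct and follows essentially the same approach as the paper: compute $\dot{\Omega}(s)$ from the two Schr\"odinger equations to obtain $U_\tau^*(s)\bigl[i\tau(H(s)-\lambda(s)\bbone)+[\dot P(s),P(s)]\bigr]U_A(s)$, then use the intertwining property, $H(s)P(s)=\lambda(s)P(s)$, and Fact~\ref{fone} to reduce. The paper's version is merely terser, compressing your last two displayed steps into a single sentence.
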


\begin{proof}Using~\eqref{tau} and~\eqref{hamAdia}, we note that $\dot{\Omega}(s)=U_\tau^*(s)\Big[i\tau\big(H(s)-\lambda(s)\bbone\big)+[\dot{P}(s),P(s)]\Big]U_A(s)$.
The claim follows from the intertwining property (Lemma~\ref{lem:intertwining}), Fact~\ref{fone} and $H(s)P(s)=\lambda(s)P(s)$.
\end{proof}

\begin{claim}\label{fthree}
Let $\Phi(s)=e^{-i\tau\lambda(s)}\bbone$ and $V_A(s)=\Phi^*(s)U_A(s)$. Then $V_A(s)$ satisfies the intertwining property~\eqref{twin}, that is, $V_A(s)P(0)=P(s)V_A(s)$, as well as the Schr\"odinger equation
$
 \dot{V}_A(s)=[\dot{P}(s),P(s)]V_A(s).
$
\end{claim}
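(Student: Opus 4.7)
The plan is to verify both assertions by direct computation, using throughout the fact that $\Phi^*(s)$ is a scalar multiple of the identity and therefore commutes with every operator.

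For the intertwining property, the computation is essentially immediate: since $\Phi^*(s)$ is a scalar, it commutes with $P(s)$, so
\[
V_A(s) P(0) = \Phi^*(s)\, U_A(s) P(0) = \Phi^*(s)\, P(s) U_A(s) = P(s)\, \Phi^*(s) U_A(s) = P(s) V_A(s),
\]
where the middle equality applies Lemma~\ref{lem:intertwining} to $U_A$.

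For the Schr\"odinger equation, I would differentiate $V_A = \Phi^* U_A$ by the product rule. The contribution from $\dot{U}_A = -i\tau H_A U_A$, using the definition~\eqref{hamAdia} of $H_A$, splits into a scalar piece $-i\tau\lambda(s) V_A$ and a commutator piece $\Phi^* [\dot{P}(s), P(s)] U_A$. The contribution from $\dot{\Phi}^*$ is the opposite scalar $+i\tau\lambda(s) V_A$ (here I read $\Phi(s) = e^{-i\tau\lambda(s)}\bbone$ in the conventional sense that its $s$-derivative generates exactly the scalar part of $H_A$, i.e.\ $\dot{\Phi} = -i\tau\lambda \Phi$, corresponding to the standard dynamical-phase factor). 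The two scalar terms cancel, and pulling $\Phi^*$ past the commutator (again because it is a scalar) yields $\dot{V}_A(s) = [\dot{P}(s), P(s)]\, V_A(s)$, as claimed.

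No step involves any genuine obstacle; both assertions are routine manipulations. The whole point of the claim is that $V_A$ is the interaction-picture version of $U_A$ with the dynamical phase factored out, so that its generator reduces to the pure commutator $[\dot{P}, P]$. This reformulation will be crucial in the subsequent bound on $\eps_{AP}(s)=\|(\Omega(s)-\bbone)P(0)\|$, because the commutator can then be exchanged (via the assumption~\eqref{eq:commm}) for $H(s)X(s)-X(s)H(s)$, enabling an integration by parts that converts the fast-oscillating $\tau$-dependence into the required $1/\tau$-type error estimate.
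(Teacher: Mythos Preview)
Your proof is correct and follows essentially the same route as the paper's: both verify the intertwining property by commuting the scalar $\Phi^*(s)$ past $P(s)$ and invoking Lemma~\ref{lem:intertwining}, and both obtain the Schr\"odinger equation by the product rule, using that the scalar part $-i\tau\lambda(s)$ cancels between $\dot\Phi^*$ and the $\lambda(s)\bbone$ piece of $H_A$, leaving only the commutator term. Your parenthetical remark about interpreting $\Phi$ so that $\dot\Phi=-i\tau\lambda\Phi$ is apt, since the literal expression $e^{-i\tau\lambda(s)}$ differentiates to $-i\tau\dot\lambda(s)\Phi(s)$; the paper's own proof uses exactly the relation $i\dot\Phi(s)=\tau\lambda(s)\Phi(s)$, consistent with your reading (and in the actual application $\lambda\equiv 0$, so the point is moot).
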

\begin{proof}
The fact that $V_A(s)$ satisfies the intertwining property is immediate since $U_A(s)$ satisfies this property and  $\Phi(s)$, being proportional to the identity, commutes with any operator.
The fact that it satisfies the Schr\"odinger equation follows from the facts that $\Phi(s)$ satisfies $i\dot{\Phi}(s)=\tau\lambda(s)\Phi(s)$, $U_A(s)$ satisfies $i\dot{U}_A(s)=\tau H_A(s)U_A(s)$, and both terms of $H_A(s)=\lambda(s)\bbone + \frac{i}{\tau}[\dot{P}(s),P(s)]$ commute.
\end{proof}

Let $X(s)$ an operator solution of $\dot{P}(s)P(s)=[H(s),X(s)],$ then
\[
\begin{split}
\big(\Omega(s)-\bbone \big)P(0)&=\int_0^s\dot{\Omega}(s')ds'P(0)\\
&=\int_0^s U_\tau^*(s')\dot{P}(s')U_A(s')ds'P(0)\\
&=\int_0^s U_\tau^*(s')\Phi(s')\dot{P}(s')V_A(s')ds'P(0)\\
&=\int_0^s U_\tau^*(s')\Phi(s')[H(s'),X(s')]V_A(s')ds'P(0)\\
&=\int_0^s U_\tau^*(s')\Phi(s')[H(s')-\lambda(s')]X(s')V_A(s')ds'P(0)\\
&=\frac{1}{i\tau}\int_0^s \partial_{s'}[{U}_\tau^*(s')\Phi(s')]X(s')V_A(s')ds'P(0)\\
&=\frac{1}{i\tau}\Big[ U_\tau^*(s')\Phi(s')X(s')V_A(s')\Big]^s_0 P(0)
-\frac{1}{i\tau}\int_0^s U_\tau^*(s')\Phi(s')\partial_{s'}[X(s')V_A(s')]ds'P(0)\\
&=\frac{1}{i\tau}\Big[ U_\tau^*(s')X(s')U_A(s')\Big]^s_0 P(0)-\frac{1}{i\tau}\!\int_0^s U_\tau^*(s')[\dot{X}(s')+X(s')\dot{P}(s')]U_A(s')ds'P(0)
\end{split}
\]
We explain line by line:
\begin{enumerate}
\item[$(1\rightarrow 2)$] We use Claim~\ref{ftwo}.
\item[$(2\rightarrow 3)$] We rearrange the expression using $U_A(s)=\Phi(s)V_A(s)$ and the fact that $\Phi(s)$ commutes with any operator.
\item[$(3\rightarrow 4)$] We use the intertwining property for $V_A(s)$ (Claim~\ref{fthree}) and Equation \eqref{eq:commm}.
\item[$(6\rightarrow 7)$] We integrate by parts.
\end{enumerate}
The third term in the last line is null, because $X(s)=X(s)P(s)$ and the intertwining property (Lemma~\ref{lem:intertwining}) yields the expression $P\dot{P}P$, which is zero by Fact~\ref{fone}. Using the triangle inequality, the fact that a norm is preserved by unitary operations and can only decrease under projections, we finally have
\[\begin{split}
\eps_{AP}(s)&=\|\big(\Omega(s)-\bbone\big)P(0)\|\\
&\leq \frac{1}{\tau}\Big[ \|X(0)\|+\|X(s)\|+s\max_{s'\in[0,s]}\|\dot{X}(s')P(s')\|\Big]\\
&\leq \frac{1}{\tau} \max_{s\in[0,1]}\Big[ 2\|X(s)\|+\|\dot{X}(s)P(s)\|\Big]
\end{split}
\]

This conclude the proof.\hfill$\blacksquare$

\end{document}